\newtheorem*{algorithm*}{Algorithm}
\newtheorem{theorem}{Theorem}
\newtheorem{corollary}{Corollary}
\newtheorem{definition}{Definition}
\theoremstyle{plain}
\newtheorem{proposition}[theorem]{Proposition}
\newtheorem{lemma}[theorem]{Lemma}
\newtheorem{claim}{Claim}
\newtheorem{observation}{Observation}
\newcommand{\N}{\mathbbm{N}}
\newcommand{\bip}{\mathrm{bip}}
\newcommand{\fvs}{\mathrm{\textsc{fvs}}}
\newcommand{\vc}{\mathrm{\textsc{vc}}}
\newcommand{\hS}{\hat{S}}
\newcommand{\tree}{\ensuremath{\mathop{\mathrm{\textsc{treeminor}}}}}
\newcommand{\I}{\mathcal{I}}
\newenvironment{claimproof}{\begin{proof}\renewcommand{\qedsymbol}{\claimqed}}{\end{proof}\renewcommand{\qedsymbol}{\plainqed}}
\newtheorem{fact}{Fact}
\let\plainqed\qedsymbol
\newcommand{\Pum}{\ensuremath{\mathrm{\textsc{pum}}}}
\newcommand{\bistable}{\ensuremath{\mathrm{\textsc{bi}}}}
\newcommand{\MTJ}{\ensuremath{\mathrm{\textsc{mtj}}}}
\newcommand{\TAR}{\ensuremath{\mathrm{\textsc{tar}}}}
\newcommand{\pw}{\mathrm{\textsc{pw}}}
\newcommand{\depth}{\ensuremath{\mathrm{\textsc{depth}}}}
\begin{document}

\title{Independent Set Reconfiguration Thresholds of Hereditary Graph Classes}

\author{Mark de Berg}
\author{Bart M.P. Jansen}
\author{Debankur Mukherjee}
\affil{Department of Mathematics and Computer Science,

Eindhoven University of Technology, The Netherlands}

\renewcommand\Authands{, and }

\date{\today}

\maketitle

\begin{abstract}
Traditionally, reconfiguration problems ask the question whether a given solution of an optimization problem can be transformed to a target solution in a sequence of small steps that preserve feasibility of the intermediate solutions. In this paper, rather than asking this question from an algorithmic perspective, we analyze the combinatorial structure behind it. We consider the problem of reconfiguring one independent set into another, using two different processes: (1)~exchanging exactly~$k$ vertices in each step, or (2)~removing or adding one vertex in each step while ensuring the intermediate sets contain at most~$k$ fewer vertices than the initial solution. We are interested in determining the minimum value of~$k$ for which this reconfiguration is possible, and bound these threshold values in terms of several structural graph parameters. For hereditary graph classes we identify structures that cause the reconfiguration threshold to be large.
\end{abstract}

\section{Introduction}
Over the past decade, reconfiguration problems have drawn a lot of attention of researchers in algorithms and combinatorics \cite{B14,BKW14,DDFHI15,FHOU15,HD05,IDHPSUU11,KMM12,MNRSS13,W14}.
In this framework, one asks the following question: Given two solutions $I$, $J$ of a fixed optimization problem, can $I$ be transformed into $J$ by a sequence of small steps that maintain feasibility for all intermediate solutions?
Such problems are practically motivated by the fact it may be impossible to adapt a new production strategy instantaneously if it differs too much from the strategy that is currently in use; changes have to be made in small steps, but production has to keep running throughout. From a theoretical perspective, the study of reconfiguration problems provides deep insights into the structure of the solution space.
One of the well-studied examples is when the solution space consists of all the independent sets of a graph (optionally all having a prescribed size).
In this case, three types of reconfiguration rules have been considered. These are naturally explained using \emph{tokens} on vertices of the graph. In \emph{Token Addition Removal} (TAR)~\cite{IDHPSUU11, MNRSS13}, there is a token on every vertex of the initial independent set, and there is a buffer of tokens, initially empty. A step consists of removing a token from a vertex and placing it in the buffer, or placing a buffer token onto a vertex of the graph. The set of vertices with tokens must form an independent set at all times, and the goal is to move the tokens from the initial to the target independent set while ensuring the buffer size never exceeds a given threshold. In \emph{Token Sliding} (TS)~\cite{KMM12, HD05}, a step consists of replacing one vertex~$v$ in the independent set by a neighbor of~$v$ (the token slides along an edge). In \emph{Token Jumping} (TJ)~\cite{KMM12} a step also consists of replacing a single vertex, but the newly added vertex need not have any neighboring relation with the replaced vertex (the token jumps). Token jumping reconfiguration is equivalent to TAR reconfiguration with a buffer of size one.

These models have been analyzed in detail in the recent literature on algorithms~\cite{B14,BKW14,DDFHI15,FHOU15,GH10,LokshtanovMPRS15}, complexity theory~\cite{HD05,IDHPSUU11,KMM12,MNRSS13}, combinatorics~\cite{CDP06,DP06}, and even statistical physics~\cite{JLNSW12,KR15,NZB16}. It is known that the reconfiguration problem under all the above three rules is PSPACE-complete for general graphs, perfect graphs, and planar graphs \cite{HD05,KMM12,IDHPSUU11}. The TJ and TAR reconfiguration problems are PSPACE-complete even for bounded bandwidth graphs~\cite{W14}. Further analyses on the complexity can be found in~\cite{B14,BKW14,DDFHI15,FHOU15,LokshtanovMPRS15}. The constrained token moving problems are related to pebbling games that have been studied in the literature, with applications to robot motion planning~\cite{ABHS15,CDP06, DP06, GH10}.

As mentioned, the goal in reconfiguring independent sets is to go from one given independent $I$ to another one $J$ by a sequence of small steps. In the TS and TJ models, a step involves moving only a single token. This is ideal, but unfortunately reconfiguration is often impossible in the TS or TJ model. Reconfiguration in the TAR model is always possible if one makes the buffer size sufficiently large. However, having a large buffer size is undesirable. We are interested in determining the minimum buffer size that is sufficient to ensure any independent set in a given graph~$G$ can be reconfigured to any target independent set of the same size. We call this minimum the TAR \emph{reconfiguration threshold} (precise definitions in Section~\ref{sec:prelim}). Our aim is to bound the threshold in terms of properties of the graph, and to identify the structures contained in hereditary graph classes that cause the thresholds to be large. We also generalize the TJ model to \emph{Multiple Token Jumping} (MTJ), where in each step a prescribed number of tokens may be moved simultaneously. In the MTJ model, the question becomes: What is the minimum number of simultaneously jumping tokens needed to ensure any reconfiguration is possible? This quantity is called the MTJ \emph{reconfiguration threshold}.

\begin{figure}
\begin{subfigure}{4.6cm}
\begin{center}
\includegraphics[scale=0.35]{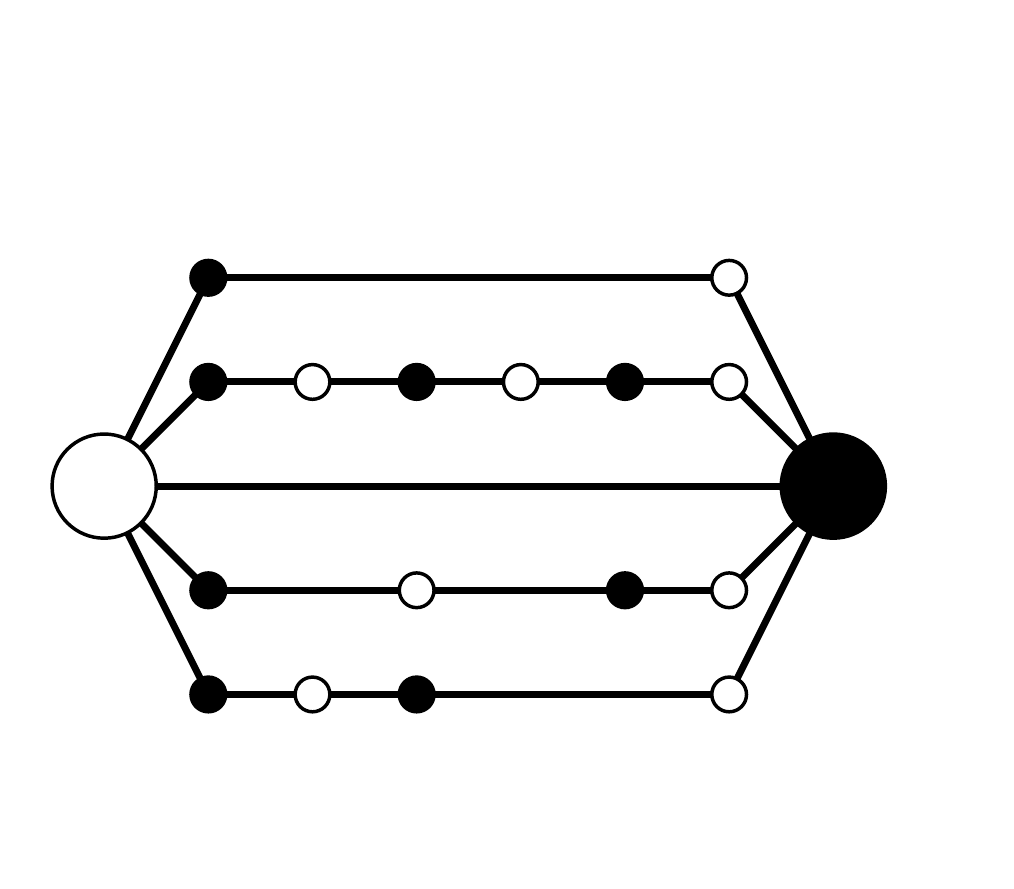}
\end{center}
\caption{A pumpkin of size 18.}
\label{fig:pumpkin}
\end{subfigure}
\begin{subfigure}{10cm}
\begin{center}
\includegraphics[scale=0.35]{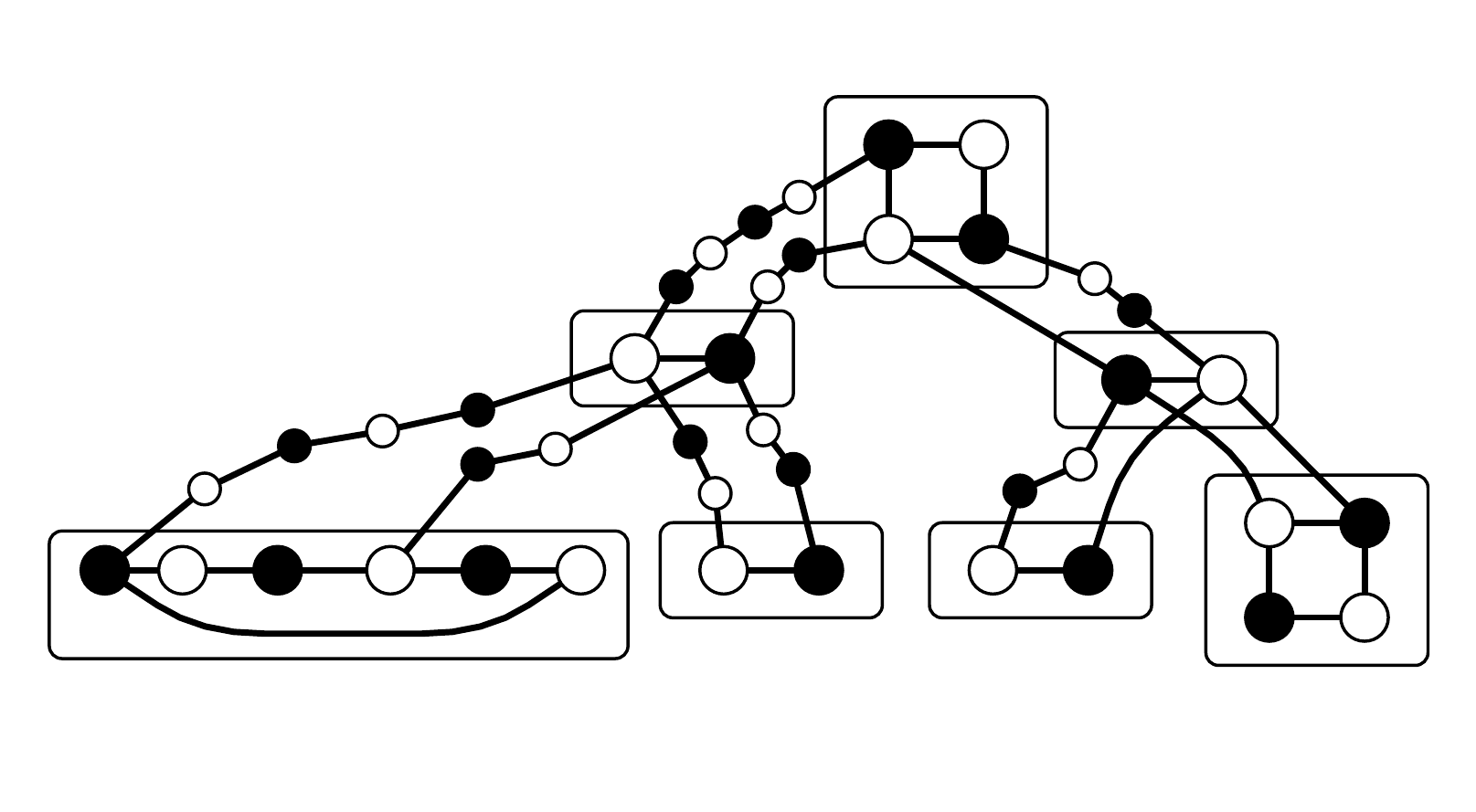}
\end{center}
\caption{A graph of treewidth two with a complete binary tree~$T$ of depth two as a bipartite topological double minor.}
\label{fig:binary}
\end{subfigure}
\caption{The bipartite structures responsible for large MTJ and TAR reconfiguration thresholds, respectively. A \emph{pumpkin} consists of odd-length vertex-disjoint paths between two vertices. The special form of topological \emph{minor} represents each vertex of the tree~$T$ by an edge or even cycle in~$G$, and each edge of~$T$ by two odd-length paths connecting vertices in opposite partite sets in~$G$.}
\end{figure}

\subparagraph*{Our contribution.} We provide upper and lower bounds on the MTJ and TAR reconfiguration thresholds in terms of several graph parameters. In general, our bounds apply to the reconfiguration thresholds of hereditary \emph{graph classes}. The threshold of a graph class is the supremum of the threshold values of the graphs in that class: it is the smallest value~$k$ such that for any graph in the class, any source independent set~$I$ in that graph can be reconfigured into any target independent set~$J$ using steps of size~$k$ (for MTJ) or a buffer of size~$k$ (for TAR).

The MTJ reconfiguration threshold of graphs that are structurally very simple, may nevertheless be very large. For example, an even cycle with~$2n$ vertices can be partitioned into two independent sets~$I$ and~$J$ of size~$n$ each. Any MTJ reconfiguration of~$I$ into~$J$ requires a jump of~$n$ vertices, and this is trivially sufficient. Since a cycle has a feedback vertex set (FVS, see Section~\ref{sec:prelim}) of size one, the MTJ threshold cannot be bounded in terms of the size of a minimum feedback vertex set. However, we prove that the threshold is upper-bounded by the size of a minimum vertex cover of~$G$. Although this bound is tight in the worst case, there are many graph classes with a small MTJ threshold even though they require a large vertex cover. Trees for example have MTJ threshold at most one. We therefore introduce the notion of \emph{pumpkin}, which consists of two nodes connected by at least two vertex-disjoint paths of odd length (Figure~\ref{fig:pumpkin}). The \emph{size} of a pumpkin is the total number of vertices in the structure. We characterize the MTJ reconfiguration threshold of a hereditary graph class~$\Pi$ in terms of the size of the largest pumpkin it contains: the MTJ reconfiguration threshold is upper- and lower-bounded in terms of the largest pumpkin contained in a bipartite graph in~$\Pi$.

TAR reconfiguration is more versatile than MTJ reconfiguration. In the concrete example of a $2n$-cycle discussed above, its MTJ threshold is~$n$ while any pair of independent sets can be reconfigured in the TAR model using a buffer of size two. Moreover, we show that any graph that has a feedback vertex set of size $k$ has TAR reconfiguration threshold at most $k+1$, and reconfiguring one side of the complete bipartite graph~$K_{n,n}$ to the other side shows that this is tight. Our main result concerning TAR reconfiguration states that the TAR reconfiguration threshold of any graph is upper-bounded by its pathwidth. Somewhat surprisingly, there are graphs of constant treewidth (treewidth 2 suffices) for which the TAR reconfiguration threshold is arbitrarily large. We also introduce the concept of \emph{bipartite topological double minor} (BTD-minor), see Figure~\ref{fig:binary}, and show using an isoperimetric inequality that any hereditary graph class containing a graph having a complete binary tree of depth $d$ as a BTD-minor, has TAR reconfiguration threshold $\Omega(d)$. We conjecture that the TAR reconfiguration threshold can also be upper-bounded in terms of the depth of the largest complete binary tree BTD-minor, but we have not been able to prove this (see Section~\ref{sec:conclusion}).

We require the restriction to hereditary graph classes in some of our statements to be able to develop meaningful lower bounds on reconfiguration thresholds, as explained next. Let~$G$ be the disjoint union of~$K_{n,n}$ and a graph~$H$, and let~$I$ and~$J$ be the two partite sets of~$K_{n,n}$. One can verify that~$I$ can be reconfigured to~$J$ by jumps of size at most one if and only if~$H$ has an independent set of size~$n-1$. Similarly,~$I$ can be TAR reconfigured to~$J$ using a buffer of size~$k$ if and only if~$H$ has an independent set of size~$n-k$. Since the size of a maximum independent set is NP-complete to determine, there are no good characterizations of this quantity. When developing lower bounds on the threshold of a hereditary graph class~$\Pi$, this issue disappears since the reconfiguration threshold of any class containing the graph~$G$ above, is at least as high as the threshold of~$H$ (which must be contained in~$\Pi$ if~$G$ is), which is~$n$. The restriction to hereditary graph classes therefore enables us to focus our attention to reconfiguration problems where all vertices in the graph are contained in either the source or target independent set, thereby avoiding the obstacle that the reconfiguration threshold matches the size of a maximum independent set.

\subparagraph*{Applications.}
The MTJ and TAR reconfiguration thresholds play an important role in statistical physics and wireless communication networks. To understand the importance of the TAR reconfiguration threshold, consider the following process:
In a graph $G$, nodes are trying to become \emph{active} (transmit information) at some rate, independently of each other in a distributed manner.
When a potential activation occurs at a node, it can only become active if none of its neighboring nodes are active at that moment (as otherwise the transmissions would interfere). An active node deactivates at some rate independent of the other processes. At any point in time, the set of active nodes in this process forms an independent set of the graph. In statistical physics, this process is known as Glauber dynamics with \emph{hard-core interaction}. This activity process on graphs has many applications in different fields of study.
Loosely speaking, when the activation rate is large, in the long run the above process always tries to stay in a maximum independent set.
For the graphs with more than one maximum independent set, it is interesting to study the time this process takes to reach a target independent set, starting from some specific independent set. This time has been shown to depend crucially upon what we call the TAR reconfiguration threshold of the underlying graph \cite{NZB16}.
In particular, the mixing time of the Glauber dynamics on a graph increases exponentially with its TAR reconfiguration threshold, and hence the Glauber dynamics on the graph is fast mixing if and only if the TAR reconfiguration threshold is small.

The MTJ reconfiguration threshold of a graph~$G$ can be interpreted in the following way. Consider the auxiliary graph, whose vertices correspond to size-$s$ independent sets in $G$ for some fixed~$s$, with an edge between vertices representing sets $I$, $J$ if $|I \setminus J| \leq k$. Then the MTJ reconfiguration threshold is at most~$k$ if and only if this graph auxiliary graph is connected for all~$s$. The MTJ reconfiguration threshold therefore has applications in the parallel Glauber dynamics (PGD) \cite{JLNSW12, KR15}, where the MTJ reconfiguration threshold provides the jump size required to make the underlying Markov process ergodic.

\subparagraph*{Organization.} The succeeding sections are organized as follows. In Section~\ref{sec:prelim} we provide graph-theoretic preliminaries. In Section~\ref{sec:reconfig} we provide a formal description of the two types of reconfiguration. In Section~\ref{sec:mtj} we analyze MTJ reconfiguration. Section~\ref{sec:tar} deals with TAR reconfiguration.

\section{Preliminaries}\label{sec:prelim}

In this section we give the most important graph-theoretic definitions. Notions not defined here can be found in one of the textbooks~\cite{ParAlgo15,D2000}.
A graph is a pair $G=(V,E)$, where $V$ is the set of vertices, and $E$ is the set of edges. We also use~$V(G)$ and~$E(G)$ to refer to the vertex and edge set of~$G$, when convenient. All graphs we consider are finite, simple, and undirected. For~$U \subseteq V$ we denote by~$G-U$ the graph obtained from~$G$ by removing the vertices in~$U$ and their incident edges.
A set $U\subseteq V$ is called an \emph{independent set} of $G$, if $\{u,v\}\notin E$ for any $u,v\in U$.
The \emph{symmetric difference} of two sets~$U$ and~$U'$ is~$U \Delta U' := (U_1\setminus U_2)\cup (U_2\setminus U_1)$. A set $U\subseteq V$ is a \emph{vertex cover} of $G$ if every edge in $E$ is incident with a vertex in $U$.
The minimum cardinality of a vertex cover of $G$ is denoted by $\vc(G)$.
A set~$U \subseteq V$ is a \emph{feedback vertex set} if~$G-U$ is acyclic (a \emph{forest}).
The minimum cardinality of a feedback vertex set of~$G$ is denoted~$\fvs(G)$. For a vertex $v$, denote by $N_G(v)$ the set of its neighbors (excluding $v$ itself).
The \emph{open} and \emph{closed} \emph{neighborhood} of a set~$U \subseteq V$ are~$N_G(U) := \bigcup _{s \in U} N_G(s) \setminus U$ and $N_G[U] := \bigcup _{s \in U} N_G(s) \cup U$, respectively. We omit the subscript when it is clear from the context. A graph $G'=(V',E')$ is said to be a \emph{subgraph} of $G$, if $V'\subseteq V$, and $E'\subseteq E$.
It is an \emph{induced subgraph} of $G$ if $V'\subseteq V$ and for any $u,v\in V'$ we have $\{u,v\} \in E$ if and only if $\{u,v\}\in E'$. The subgraph of~$G$ induced by~$U \subseteq V$ is denoted~$G[U]$. A \emph{graph class} is a (possibly infinite) collection of graphs.
A graph class $\Pi$ is said to be \emph{hereditary} if given any graph $G\in\Pi$, any induced subgraph of $G$ belongs to the class $\Pi$ as well.
A graph is \emph{bipartite} if its vertex set can be partitioned into two independent sets~$I$ and~$J$, which are also called the \emph{partite sets}. We sometimes denote such a bipartite graph by~$G = (I \cup J, E)$. A bipartite graph is \emph{balanced} if~$|I| = |J|$.
A \emph{matching} is a set of edges that do not share any endpoints. A matching \emph{covers} a vertex~$v$ if it contains an edge incident on~$v$. A matching is \emph{perfect} if it covers all vertices. We will utilize the following well-known consequence of K\H{o}nig's theorem.
\begin{fact}[{\cite[Corollary 16.7]{Schrijver03}}] \label{fact:konig}
Let~$G = (I \cup J, E)$ be a bipartite graph. Then~$G$ has a matching covering~$I$ if and only if~$|N(S)| \geq |S|$ for each~$S \subseteq I$.
\end{fact}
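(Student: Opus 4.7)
The plan is to prove this in the classical Hall-marriage-theorem style, using König's theorem on the equality of maximum matching and minimum vertex cover in bipartite graphs as a black box. The forward direction is essentially definitional: given a matching $M$ covering $I$ and any $S \subseteq I$, the map sending each $v \in S$ to its $M$-partner in $J$ is injective and takes values in $N(S)$, so $|N(S)| \geq |S|$.

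For the converse, I would argue by contradiction. Assume Hall's condition holds and suppose that no matching of $G$ covers $I$. Then any maximum matching $M$ satisfies $|M| < |I|$. By König's theorem there is a vertex cover $C$ with $|C| = |M|$. Split $C = C_I \cup C_J$ with $C_I := C \cap I$ and $C_J := C \cap J$, and set $S := I \setminus C_I$. Since $|I| > |M| = |C_I| + |C_J|$ we obtain
\[
|S| = |I| - |C_I| > |C_J|,
\]
so in particular $S \neq \emptyset$. Any edge incident to a vertex of $S$ has its $S$-endpoint outside the vertex cover $C$, so its other endpoint must lie in $C_J$; hence $N(S) \subseteq C_J$ and $|N(S)| \leq |C_J| < |S|$, contradicting Hall's condition.

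A self-contained alternative would avoid König and instead proceed by strong induction on $|I|$, with a case split depending on whether every proper nonempty $S \subsetneq I$ satisfies the strict inequality $|N(S)| \geq |S| + 1$ (in which case one matches an arbitrary vertex to an arbitrary neighbor and recurses on the remainder) or some proper nonempty $S \subsetneq I$ achieves equality $|N(S)| = |S|$ (in which case one recurses separately on $G[S \cup N(S)]$ and on the bipartite graph on $(I \setminus S) \cup (J \setminus N(S))$, each of which inherits Hall's condition). The main obstacle in this route is verifying that Hall's condition propagates to both subinstances in the equality case, a short but slightly delicate neighborhood calculation; the König-based proof I sketched above shifts that difficulty onto König's theorem itself and is therefore cleaner to present as a one-paragraph argument.
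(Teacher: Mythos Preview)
Your proof is correct; the forward direction is immediate and the contrapositive argument via K\H{o}nig's theorem is the standard one. Note, however, that the paper does not prove this statement at all: it is recorded as a \emph{Fact} with a citation to Schrijver and is used as a black box. So there is nothing to compare against beyond observing that your approach is a standard textbook derivation, entirely adequate for a result the paper simply imports.
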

A vertex~$v$ is a \emph{cutvertex} in graph~$G$ if the removal of~$v$ increases the number of connected components. A graph is \emph{biconnected} if it does not contain a cutvertex. Under this definition, the graph~$K_2$ is biconnected. A \emph{biconnected component} of~$G$ is a maximal biconnected subgraph of~$G$.

\begin{definition}[{\cite[\S 7.2]{ParAlgo15}}]
A \emph{path decomposition} of a graph $G = (V,E)$ is a sequence $\mathcal{P} = (X_1, X_2,\ldots,X_r)$ of subsets of $V$ called \emph{bags}, satisfying the following conditions:
\begin{enumerate}[{\normalfont (P1)}]
\item $\bigcup_{i=1}^r X_i=V$. In other words, every vertex of $G$ is in at least one bag.
\item For every $\{u,v\}\in E$, there exists $l\in\{1,2,\ldots,r\}$ such that the bag $X_l$ contains both $u$ and $v$.
\item For every $v\in V$, if $u\in X_i\cap X_k$ for some $i\leq k$, then $u\in X_j$ also for each $j$ such that $i\leq j\leq k$. In other words, the indices of the bags containing $u$ form an interval in~$\{1,2,\ldots, r\}$.
\end{enumerate}
\end{definition}
The \emph{width} of a path decomposition $(X_1,\ldots, X_r)$ is $\max_{1\leq i\leq r}|X_i|-1$.
The \emph{pathwidth} of $G$, denoted by $\pw(G)$, is the minimum possible width of a path decomposition of $G$.
A path-decomposition $(X_1,X_2,\ldots,X_r)$ of a graph $G$ is \emph{nice} if the following holds:
\begin{enumerate}[{\normalfont (i)}]
\item $X_1=X_r=\emptyset$, and
\item for every $i\in\{1,2,\ldots, r-1\}$, there is either a vertex $v\notin X_i$ such that $X_{i+1}=X_i\cup\{v\}$, or there is a vertex $w\in X_i$ such that $X_{i+1}=X_i\setminus\{w\}$.
\end{enumerate}
It is well-known (cf.~\cite[Lemma 7.2]{ParAlgo15}) that every graph admits a nice path decomposition of width~$\pw(G)$.
 For any path decomposition $\mathcal{P}=(X_1,X_2,\ldots,X_r)$ of $G = (V,E)$, and any vertex $v\in V$, define $l_{\mathcal{P}}(v)=\min\{i:v\in X_i\}$ and $r_{\mathcal{P}}(v)=\max\{i:v\in X_i\}$, i.e.~$l_{\mathcal{P}}(v)$ and $r_{\mathcal{P}}(v)$ respectively denote the index of the first and last bag containing $v$. Note that if $\mathcal{P}$ is nice, then $l_{\mathcal{P}}(\cdot)$ and $r_{\mathcal{P}}(\cdot)$ are injective maps over the set of vertices.

\section{Definitions and Basic Facts for Reconfiguration} \label{sec:reconfig}
In this section we formally define the two notions of reconfiguration and establish some basic facts.

\subparagraph{Multiple Token Jump (MTJ).}
Given any two independent sets $I$ and $J$, with $|I|=|J|$, we say that $I$ can be \emph{$k$-MTJ reconfigured} to $J$, if there exists a finite sequence of independent sets $(I = W_0, W_1,W_2,\ldots,W_n, W_{n+1}=J)$ for some $n\geq 0$, such that for all $i \in \{0, \ldots, n+1\}$ the set~$W_i$ is independent in~$G$, $|W_i|=|I|=|J|$, and $|W_{i+1}\setminus W_i|\leq k$.
A step $W_i\to W_{i+1}$ in the reconfiguration process with~$|W_{i}\setminus W_{i+1}|=k$ is called a $k$-TJ move.
Given a graph~$G = (V,E)$, define $\MTJ(G,s)$ as the minimum value of $k$, such that any two independent sets of size $s$ in~$G$ can be $k$-MTJ reconfigured to each other. Now define $\MTJ(G):=\max_{1\leq s\leq |V|} \MTJ(G,s)$.
Our goal is to characterize the value of $\MTJ(G)$ in terms of certain parameters of the graph $G$. We call $\MTJ(G)$ the \emph{MTJ reconfiguration threshold} of the graph $G$. The MTJ reconfiguration threshold of a graph class $\Pi$ is defined as $\MTJ(\Pi):=\sup_{G\in\Pi} \MTJ(G)$.

\subparagraph{Token Addition Removal (TAR).}
Given any two independent sets $I$ and $J$, with $|I|=|J|$, we say that $I$ can be \emph{$k$-TAR reconfigured} to $J$, if there exists a finite sequence of independent sets $(I = W_0, W_1,W_2,\ldots,W_n, W_{n+1}=J)$ for some $n\geq 0$, such that~$W_i$ is independent in~$G$, $|I|-|W_i|\leq k$, and $|W_{i-1}\Delta W_{i}|\leq 1$ for all $i \in \{0, \ldots, n+1\}$.
We refer to the quantity $B_i:=|I|-|W_i|$ as the \emph{buffer size} at step $i$: the tokens that were on the initial independent set, and are not on the current independent set~$W_i$, are placed in the buffer.
Define $\TAR(G,s)$ to be the smallest buffer size~$k$ such that any two independent sets of size $s$ can be $k$-TAR reconfigured to each other. Define $\TAR(G):=\max_{1\leq s\leq |V|} \TAR(G,s)$. As before, we call $\TAR(G)$ the \emph{TAR reconfiguration threshold} of the graph $G$, and extend the same terminology to graph classes~$\Pi$ by defining $\TAR(\Pi):=\sup_{G\in\Pi} \TAR(G)$.

\subparagraph{Facts on Reconfiguration.}
Observe that for any graph~$G$, it holds that $\MTJ(G) = 1$ if and only if $\TAR(G)= 1$. In general, the TAR reconfiguration threshold is at most the MTJ reconfiguration threshold. Indeed, to see this, observe that each $k$-TJ move can be thought of as a sequence of $2k$ steps with maximum buffer size $k$. First, sequentially remove the $k$ vertices that are jumping away, placing their tokens in the buffer; then sequentially place the buffer tokens on the $k$ new vertices in the independent set.

\begin{proposition} \label{prop:balancedbip}
Let~$G$ be a graph with independent sets~$I$ and~$J$ of equal size. If~$I \setminus J$ can be $k$-TAR reconfigured (resp.\,$k$-MTJ reconfigured) to $J \setminus I$ in the graph~$G[I \Delta J]$, then~$I$ can be $k$-TAR reconfigured (resp.\,$k$-MTJ reconfigured) to~$J$ in~$G$.
\end{proposition}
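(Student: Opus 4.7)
The plan is to lift the given reconfiguration sequence in $G[I\Delta J]$ to a reconfiguration sequence in $G$ by adjoining the common part $C := I \cap J$ to every intermediate set. Concretely, given a reconfiguration sequence $(W_0, W_1, \ldots, W_{n+1})$ with $W_0 = I \setminus J$ and $W_{n+1} = J \setminus I$ inside $G[I \Delta J]$, I would define $W_i' := W_i \cup C$ and claim that $(W_0', \ldots, W_{n+1}')$ witnesses the required reconfiguration from $I$ to $J$ in $G$. The endpoints are immediate: $W_0' = (I\setminus J) \cup C = I$ and $W_{n+1}' = (J\setminus I) \cup C = J$.

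The first thing to verify is that each $W_i'$ is independent in $G$. There are no edges inside $W_i$ because $W_i$ is independent in the induced subgraph $G[I\Delta J]$, and there are no edges inside $C$ because $C \subseteq I$ and $I$ is independent in $G$. For the cross edges between $W_i$ and $C$, I split $W_i$ according to the partition $I\Delta J = (I\setminus J) \cup (J\setminus I)$: any vertex $v \in W_i \cap (I\setminus J)$ lies in $I$, and since $C \subseteq I$ is independent in $G$, no edge of $G$ joins $v$ to $C$; symmetrically, any $v \in W_i \cap (J\setminus I)$ lies in $J$, and $C \subseteq J$ yields the same conclusion. So $W_i'$ is independent in $G$.

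Next I would check the step-size and size constraints, which follow because adjoining the fixed set $C$ does not interact with the symmetric-difference computations. For TAR, $W_i' \Delta W_{i-1}' = W_i \Delta W_{i-1}$, so the moves remain of size at most $1$; and $|I| - |W_i'| = (|W_0| + |C|) - (|W_i| + |C|) = |W_0| - |W_i| \leq k$ since $|W_0| = |I\setminus J| = |I| - |C|$, matching the TAR buffer bound. For MTJ, each $|W_i|$ equals $|I\setminus J|$, hence $|W_i'| = |I\setminus J| + |C| = |I|$ for all $i$, and $|W_{i+1}' \setminus W_i'| = |W_{i+1} \setminus W_i| \leq k$.

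I do not foresee a real obstacle here: the only subtle point is that the cross-independence argument genuinely uses that both $I$ and $J$ are independent in $G$ (not merely that $G[I\Delta J]$ is bipartite), and that $W_i \subseteq I \Delta J$ so every vertex of $W_i$ belongs to one of $I$ or $J$. Once this is noted, the verification is just a line-by-line translation of the TAR/MTJ conditions.
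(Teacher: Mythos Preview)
Your proposal is correct and follows essentially the same approach as the paper: lift the reconfiguration sequence by adjoining $I\cap J$ to every intermediate set, then check independence and the size constraints. Your verification is in fact more detailed than the paper's, which simply observes that no vertex of $I\Delta J$ is adjacent to any vertex of $I\cap J$ (since each such pair lies together in $I$ or in $J$) and then notes that the step and buffer sizes are preserved.
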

\begin{proof}
Consider a sequence of independent sets~$(I \setminus J = W_0, \ldots, W_{n+1} = J \setminus I)$ in $G[I\Delta J]$ that reconfigures~$I \setminus J$ to~$J \setminus I$. Since~$I$ and~$J$ are independent in~$G$, no vertex of~$I \Delta J$ is adjacent to a vertex of~$I \cap J$. Hence~$W'_i := W_i \cup (I \cap J)$ is an independent set in~$G$ for all~$i$, and the sequence~$(W'_0, \ldots, W'_{n+1})$ reconfigures~$(I \setminus J) \cup (I \cap J) = I$ to~$(J \setminus I) \cup (I \cap J) = J$ in~$G$. The step size and buffer size of this sequence in~$G$ are not greater than the corresponding values for the sequence in~$G[I \Delta J]$, which completes the proof.
\end{proof}

Proposition~\ref{prop:balancedbip} shows that to upper-bound the TAR or MTJ reconfiguration threshold, it suffices to do so in balanced bipartite graphs where the source and target configurations are disjoint; note that~$G[I \Delta J]$ is balanced bipartite and~$I \setminus J$ and~$J \setminus I$ are disjoint. We will frequently exploit this in our proofs. For any graph class $\Pi$, let $\Pi_\bip$ denote the set of bipartite graphs in $\Pi$. The following proposition shows that the reconfiguration threshold of a hereditary graph class is determined by the behavior of the bipartite graphs in the class. Note that for hereditary classes~$\Pi$, the class~$\Pi_\bip$ is also hereditary.

\begin{proposition} \label{prop:bip}
For any hereditary graph class $\Pi$, we have $\MTJ(\Pi)=\MTJ(\Pi_\bip)$ and
$\TAR(\Pi)=\TAR(\Pi_\bip)$.
\end{proposition}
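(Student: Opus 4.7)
The plan is to prove both the MTJ and TAR equalities by verifying each of them as a pair of inequalities. The direction $\MTJ(\Pi_\bip) \leq \MTJ(\Pi)$ (and analogously $\TAR(\Pi_\bip) \leq \TAR(\Pi)$) is immediate from the definitions: since $\Pi_\bip \subseteq \Pi$, the supremum taken over the smaller class can only be at most the supremum taken over the larger class.

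For the reverse direction, the plan is to fix an arbitrary $G \in \Pi$ and two independent sets $I, J$ of equal size in $G$, and to argue that they can be reconfigured using step size at most $\MTJ(\Pi_\bip)$ (respectively buffer size at most $\TAR(\Pi_\bip)$). The key observation is that the induced subgraph $G[I \Delta J]$ is bipartite, with partite sets $I \setminus J$ and $J \setminus I$, which are of equal size because $|I| = |J|$. Since $\Pi$ is hereditary, $G[I \Delta J]$ lies in $\Pi$, and being bipartite it lies in $\Pi_\bip$. By the definition of $\MTJ(\Pi_\bip)$ (respectively $\TAR(\Pi_\bip)$), there exists a reconfiguration sequence from $I \setminus J$ to $J \setminus I$ inside $G[I \Delta J]$ with the claimed threshold. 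Invoking Proposition~\ref{prop:balancedbip}, this sequence lifts to a reconfiguration of $I$ into $J$ in $G$ with no increase in step size or buffer size. Taking the supremum over $G \in \Pi$ and over all pairs of equal-size independent sets yields the stated equalities.

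The main obstacle is essentially absent: this proposition is a formal consequence of Proposition~\ref{prop:balancedbip} combined with the hereditary closure property of $\Pi$. The only point that needs care is ensuring that $G[I \Delta J]$ genuinely belongs to $\Pi_\bip$ (which uses both that it is an induced subgraph of $G$ and that it is bipartite), and that $|I \setminus J| = |J \setminus I|$ so that reconfiguration between equal-size independent sets in $G[I \Delta J]$ is indeed what is required.
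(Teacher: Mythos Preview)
Your proposal is correct and follows essentially the same approach as the paper: the trivial inclusion $\Pi_\bip \subseteq \Pi$ gives one inequality, and for the other you pass to the induced bipartite subgraph $G[I \Delta J] \in \Pi_\bip$ and invoke Proposition~\ref{prop:balancedbip}. There is no meaningful difference between your argument and the paper's.
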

\begin{proof}
The definitions of the thresholds imply that $\MTJ(\Pi)\geq \MTJ(\Pi_\bip)$ and $\TAR(\Pi)\geq \TAR(\Pi_\bip)$, since~$\Pi \supseteq \Pi_\bip$. For the reverse direction, assume that the reconfiguration threshold of~$\Pi_\bip$ (in one of the models) is at most~$k$ and consider any graph~$G \in \Pi$ with independent sets~$I$ and~$J$ of equal size. By Proposition~\ref{prop:balancedbip} the cost of reconfiguring~$I$ to~$J$ is bounded by the cost of reconfiguring~$I \setminus J$ to~$J \setminus I$ in~$G[I \Delta J]$. Since~$G[I \Delta J]$ is bipartite and~$\Pi$ is hereditary, we have~$G[I \Delta J] \in \Pi_\bip$, and hence the cost of reconfiguring in~$G[I \Delta J]$ is at most~$k$. So reconfiguring~$I$ to~$J$ can be done with cost at most~$k$ in this model.
\end{proof}

\section{Threshold for Multiple Token Jump Reconfiguration} \label{sec:mtj}
We start our discussion of token jump reconfiguration by recalling the following known result.
\begin{theorem}[{\cite[Theorem~7]{KMM12}}] \label{thm:tree}
Let the graph $G=(V,E)$ be a forest. Then $\MTJ(G) \leq 1$.
\end{theorem}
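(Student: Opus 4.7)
The plan is to prove the following stronger statement by induction on $|I \Delta J|$: for any forest $H$ and any two independent sets $W, J$ of equal size in $H$ with $W \neq J$, there is a $1$-TJ move $W \to W'$ such that $W'$ is an independent set of the same size and $|W' \Delta J| < |W \Delta J|$. Iterating this claim yields a $1$-MTJ reconfiguration from $I$ to $J$ in $|I \Delta J|/2$ moves, which gives $\MTJ(G) \le 1$.

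To locate such a progress move I would pass to the induced bipartite forest $F := H[W \Delta J]$ with parts $A := W \setminus J$ and $B := J \setminus W$, where $|A| = |B| \geq 1$. The key lemma is a one-line edge count: since $F$ is a non-empty forest, $|E(F)| \le |V(F)| - 1 = 2|B| - 1$; and since $F$ is bipartite, $|E(F)| = \sum_{v \in B} \deg_F(v)$. Therefore not every vertex of $B$ can have $F$-degree at least two, so some $v \in B$ satisfies $\deg_F(v) \le 1$.

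Given such a $v$, let $u$ be its unique $F$-neighbour if one exists and otherwise any vertex of $A$, and set $W' := (W \setminus \{u\}) \cup \{v\}$. The step that needs care is verifying that $W'$ is independent in $H$. Two facts do the job: (i) $J$ is independent and $v \in J$, so $v$ has no neighbour in $W \cap J$; and (ii) $F = H[A \cup B]$ is an \emph{induced} subgraph, so every $H$-edge from $v$ into $A$ is recorded in $F$, and by the choice of $u$ the vertex $v$ therefore has no neighbour in $A \setminus \{u\}$. A short bookkeeping then gives $W' \Delta J = (W \Delta J) \setminus \{u, v\}$, so the symmetric difference shrinks by exactly two.

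The only real obstacle is the independence check for $W'$, which crucially exploits that $F$ is an \emph{induced} subgraph so that the neighbourhood of $v$ inside $A$ is faithfully recorded; once this is settled, the edge-counting lemma drives the induction without further subtlety. One could alternatively invoke Proposition~\ref{prop:balancedbip} at the outset to assume $I$ and $J$ are disjoint and partition the vertex set, but this simplification is inessential for the argument above.
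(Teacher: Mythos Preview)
Your proof is correct. The paper does not actually supply a proof of this theorem---it cites \cite[Theorem~7]{KMM12} and only offers the one-line intuition that ``one can start reconfiguring from the leaf nodes or the isolated vertices, each of which has at most one neighbor from the target configuration.'' Your argument is precisely a clean formalization of this idea: the edge count $|E(F)| \le 2|B|-1 = \sum_{v\in B}\deg_F(v)$ locates a vertex $v\in B$ of $F$-degree at most one, which is exactly a target vertex with at most one neighbor in the current configuration. The independence check via the induced-subgraph property and the bookkeeping on $W'\Delta J$ are both sound.

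One minor presentational point: the ``stronger statement'' you isolate (existence of a single progress move) does not itself require induction on $|I\Delta J|$; it is proved directly, and the induction (or equivalently, iteration) is only needed to string the moves together into a full reconfiguration sequence. The phrasing in your first sentence slightly conflates these, but the logic is fine.
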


The intuition behind this result is that since a forest does not contain any cycle, one can start reconfiguring from the leaf nodes or the isolated vertices, each of which has at most one neighbor from the target configuration. For arbitrary graphs, the above procedure does not work since there may not be any leaves or isolated vertices. But if a graph~$G$ has a small vertex cover, then its MTJ reconfiguration threshold is again small.
\begin{theorem} \label{thm:max-match}
Let $G=(V,E)$ be a graph. Then $\MTJ(G)\leq \max(\vc(G),1)$.
\end{theorem}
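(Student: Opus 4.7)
The plan is to first apply Propositions~\ref{prop:balancedbip} and~\ref{prop:bip} to reduce to the case where $G$ is bipartite with partite sets $I$ and $J$ of equal size and $I \cap J = \emptyset$. I will then fix a minimum vertex cover $C$ of~$G$ of size $k = \vc(G)$ and split $I = A \cup B$, $J = A' \cup B'$, where $A = I \cap C$, $B = I \setminus C$, $A' = J \cap C$, $B' = J \setminus C$. Two facts drive everything: $|A| + |A'| = |C| = k$ (so $|A|, |A'| \leq k$), and $V \setminus C = B \cup B'$ is an independent set by definition of a vertex cover.

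Two easy cases dispose of the low-complexity regime. If $k = 0$, then $G$ has no edges and arbitrary single-vertex swaps of size $1 = \max(0,1)$ suffice to reconfigure $I$ into $J$. If $k \geq 1$ and $|I| \leq k$, then the single jump $I \to J$ has size $|I| \leq k$ and $J$ is independent, so we are done.

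The interesting case is $k \geq 1$ and $|I| > k$. Since $|I| \geq |A| + |A'|$, we have $|B| \geq |A'|$ and $|B'| \geq |A|$, so I can fix arbitrary subsets $T_1 \subseteq B'$ of size $|A|$ and $T_2 \subseteq B$ of size $|A'|$. The reconfiguration proceeds in three phases. Phase~1 is a single jump $I = A \cup B \to W_1 := B \cup T_1$ of size $|A| \leq k$, with $W_1 \subseteq V \setminus C$ independent. Phase~2 uses a sequence of single-vertex swaps entirely inside the independent set $V \setminus C$ to transform $W_1$ into $W_2 := T_2 \cup B'$; because every intermediate set lies in $V \setminus C$ and has size $|I|$, independence is automatic and each jump has size~$1$. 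Phase~3 is a single jump $W_2 \to J = A' \cup B'$ of size $|A'| \leq k$.

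The main difficulty to overcome is that when $|I| > 2k$, both $|B|$ and $|B'|$ exceed~$k$, so a direct swap of $B$ for $B'$ cannot fit in a single $k$-jump, and even a two-step route through some ``intermediate-inside-$V\setminus C$'' configuration fails. The key insight sidestepping this is that $V \setminus C$ forms a single large independent ``playground'' in which tokens can be freely shuffled via cheap size-$1$ swaps, so only the two crossings of the vertex cover (the entry in Phase~1 and the exit in Phase~3) need to consume the full budget of up to $k$ tokens.
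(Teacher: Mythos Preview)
Your proof is correct and takes a genuinely different, more direct route than the paper. The paper argues by induction on $|V|$: after reducing to the balanced bipartite setting, it jumps the tokens on $I' \cap S$ (where $S$ is the vertex cover) onto an equal-size subset of $J' \setminus S$, then deletes those $2|I'\cap S|$ vertices and recurses on the smaller graph, invoking the full theorem again. Your argument avoids induction entirely by observing that $V \setminus C$ is one large independent set, so once all tokens sit there you can reshuffle them one at a time at cost~$1$; only the single entry jump (shedding $A$) and the single exit jump (acquiring $A'$) cost up to~$k$. This three-phase construction is cleaner and makes the role of the vertex cover more transparent: it is exactly the ``boundary'' that must be crossed, and each crossing costs at most $|C\cap I|$ or $|C\cap J|$, both bounded by $k$.

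Two minor remarks. First, the reference to Proposition~\ref{prop:bip} is unnecessary here (that proposition is about hereditary classes); Proposition~\ref{prop:balancedbip} alone suffices for the reduction to the balanced bipartite case. Second, after that reduction you are working in $G[I\Delta J]$, so the vertex cover you pick is of this induced subgraph; its size is at most $\vc(G)$, which is what you need, but it is worth saying explicitly that $\vc(G[I\Delta J]) \leq \vc(G)$ rather than silently renaming $G[I\Delta J]$ as $G$.
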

\begin{proof}
We prove the theorem using induction on the number~$n$ of vertices in $G$. For~$n=1$ the claim is trivially true, so consider a graph~$G = (V,E)$ with source and target independent sets~$I$ and~$J$ of equal size and~$|V| > 1$. Our induction hypothesis is that any graph~$G'$ with less than~$|V|$ vertices has MTJ reconfiguration threshold upper-bounded by~$\max(\vc(G'),1)$.

By Proposition~\ref{prop:balancedbip} it is enough to show that in the graph $G[I\Delta J]$ induced by $I\Delta J$, starting from $I'=I\setminus J$, one can construct a sequence of MTJ moves to reach the configuration $J'=J\setminus I$ with step-size at most $\max(\vc(G),1)$. Let~$V' := I \Delta J$. Note that~$\vc(G[I \Delta J]) \leq \vc(G)$, and let $S\subseteq V'$ be a vertex cover of $G[I \Delta J]$ of cardinality at most $\vc(G)$. If~$S$ is a vertex cover of~$G[I \Delta J]$, then there is no edge between any two vertices of $V' \setminus S$. Assume without loss of generality that $|I'\cap S|\geq |J'\cap S|$ (otherwise swap the role of $I'$ and $J'$, which does not affect reconfigurability). We distinguish three cases.

\textbf{Case 1.} If the vertex cover is empty ($S = \emptyset$), then~$G[I' \Delta J']$ has no edges. Consequently, all vertex subsets in the graph are independent, and we can reconfigure~$I'$ to~$J'$ by jumping one token at a time. By Proposition~\ref{prop:balancedbip}, this implies~$I$ can be reconfigured to~$J$ in~$G$ using jumps of size~$1$.

\textbf{Case 2.} Suppose that~$|I'| = |J'| \leq \vc(G)$. Then we can jump the tokens from~$I'$ onto~$J'$ in a single step of size at most~$\vc(G)$, and complete the argument using Proposition~\ref{prop:balancedbip}.

\textbf{Case 3.} If the previous cases do not apply, we claim that~$s := |I' \cap S| > 0$. Indeed, if~$s = 0$, then since~$|I' \cap S| \geq |J' \cap S|$ we would have~$I' \cap S = J' \cap S = \emptyset$, implying that~$S = \emptyset$ and that the first case applies. Moreover, we have~$|J'\setminus S|\geq s$, otherwise
\begin{equation}
|J'|=|J'\cap S|+|J'\setminus S|\leq (\vc(G)-s)+(s-1)=\vc(G)-1,
\end{equation}
and we are in the previous case. Now let~$Z$ be an arbitrary set of~$s$ vertices from~$J' \setminus S$.
Choose all the vertices in $I'\cap S$ and jump their tokens to~$Z$, i.e., remove the vertices $I'\cap S$ from the independent set~$I'$ and replace them by~$Z$ to obtain~$I''$. The set~$I''$ is independent because the fact that~$S$ is a vertex cover implies that the only neighbors of~$Z \subseteq J' \setminus S$ belong to the set~$S$, while~$I''$ contains no vertex from~$S$. Since~$|I \cap S| \leq |S| \leq \vc(G)$, the step size of this move is at most~$\vc(G)$.

Consider the graph~$G'$ which is obtained from~$G[I \Delta J]$ by removing~$Z$ and~$I' \cap S$, which is again a balanced bipartite graph. Note that since the only neighbors of~$Z$ belong to~$I$ (since~$G$ is bipartite and~$Z \subseteq J$), and belong to~$S$ (since~$S$ is a vertex cover and~$Z \cap S = \emptyset$), it follows that~$G'$ contains no vertex that is a neighbor of~$Z$ in~$G[I \Delta J]$. Consequently, the union of~$Z$ with any independent set in~$G'$ is independent in~$G[I \Delta J]$. Since~$G'$ is smaller than~$G$, by induction one can reconfigure~$I' \setminus S$ to~$J' \setminus Z$ in the graph~$G'$ with steps of size at most~$\vc(G') \leq \vc(G)$. Adding~$Z$ to each set in the corresponding reconfiguration sequence produces a sequence that reconfigures~$(I' \setminus S) \cup Z$ to~$J'$ in~$G[I \Delta J]$. By inserting the step from~$I'$ to~$(I' \setminus S) \cup Z$ at the front of this sequence, we obtain a reconfiguration from~$I'$ to~$J'$ with steps of size at most~$\vc(G)$ in~$G[I \Delta J]$. By Proposition~\ref{prop:balancedbip} this implies that~$I$ can be reconfigured to~$J$ with steps of size~$\vc(G)$ in~$G$, completing the proof for this case.
\end{proof}


An even cycle of length~$2n$ has MTJ reconfiguration threshold~$n$. Since its vertex cover number is~$n$, Theorem~\ref{thm:max-match} is best-possible. Long cycles are not the only graphs whose MTJ reconfiguration threshold equals half the size of the vertex set. Bistable graphs (defined below), of which the pumpkin structure defined in the introduction is a special case, also have this property.

\subsection{MTJ Reconfiguration Threshold in Terms of Bistable Rank} \label{sect:mtj:bistable}
In this section we introduce the notion of \emph{bistable graph}, derive several properties of bistable graphs, and use these to bound the MTJ reconfiguration threshold in terms of the size of the largest induced bistable subgraph. The resulting bounds on the MTJ reconfiguration threshold are tight, but can be hard to apply to specific graph classes: it may be difficult to estimate the size of the largest induced bistable graph, or even to determine whether a given graph is bistable or not. In Section~\ref{sect:mtj:pumpkin} we will therefore relate the size of the largest induced bistable subgraph to the size of the largest pumpkin subgraph. This will result in upper- and lower bounds on the MTJ reconfiguration threshold in terms of the largest pumpkin structure contained in the graph (class), which is arguably a more insightful parameter. The resulting bound will not be best-possible, however.

\begin{definition}[Bistable graphs] \label{def:bistable}
A graph is called \emph{bistable} if it is connected, bipartite, and has exactly two distinct maximum independent sets formed by the two partite sets in its unique bipartition. The \emph{rank} of a bistable graph is defined as the size of its maximum independent sets.

Let~$\bistable(G)$ denote the rank of the largest induced bistable subgraph of~$G$. If~$G$ contains no induced bistable subgraphs (which can only occur if~$G$ has no edges), then we define~$\bistable(G)$ to be one. For a graph class $\Pi$ we define $\bistable(\Pi):=\sup_{G\in\Pi}\bistable(G)$.
\end{definition}

The pumpkin shown in Figure~\ref{fig:pumpkin} forms an example of a bistable graph. Lemma~\ref{lemma:bistable} connects bistable graphs to independent set reconfiguration. Consider the task of reconfiguring the $J$-partite set to the $I$-partite set in a balanced bipartite graph~$G = (I \cup J, E)$. If we have a set~$S \subseteq I$ such that~$|S| \geq |N(S)|$, then one way to make progress in the reconfiguration is to select~$|S|$ vertices from~$N(S) \subseteq J$ and jump their tokens onto the vertices in~$S$, resulting in a new independent set of the same size. The following lemma shows that when we consider a set~$S$ that is \emph{minimal} with respect to being at least as large as its neighborhood, then the induced subgraph~$G[N[S]]$ is bistable. Hence the cost of such a jump of~$|S|$ vertices is bounded by~$\bistable(G)$, which will allow us to bound the MTJ reconfiguration threshold.

\begin{lemma} \label{lemma:bistable}
Let~$G = (I \cup J, E)$ be a balanced bipartite graph without isolated vertices and let~$S \subseteq I$ be inclusion-wise minimal with the properties that~$|S| \geq |N(S)|$ and~$S$ is not empty. Then~$G[N[S]]$ is bistable.
\end{lemma}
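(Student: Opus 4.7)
The plan is to verify directly the three defining properties of a bistable graph for $H := G[N[S]]$. Writing $T := N(S) \subseteq J$, we have $N[S] = S \cup T$, and $H$ is bipartite with partite sets $S$ and $T$ as an induced subgraph of the bipartite graph $G$. Because $G$ has no isolated vertices, every $v \in S$ has a neighbor, necessarily in $T$, and every $u \in T = N(S)$ has a neighbor in $S$ by the definition of $T$; in particular $T \neq \emptyset$.

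The workhorse throughout will be the following consequence of minimality: for every nonempty proper subset $\emptyset \neq S' \subsetneq S$, the property $|S'| \geq |N(S')|$ must fail, so $|N(S')| \geq |S'| + 1$. From this I first deduce $|S| = |T|$: if instead $|S| > |T|$, pick any $v \in S$ and set $S' := S \setminus \{v\}$, which is nonempty (otherwise $v$ would be isolated in $G$). Then $|N(S')| \leq |T| < |S| = |S'| + 1$, so $|N(S')| \leq |S'|$, contradicting minimality. Write $n := |S| = |T|$.

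For connectedness of $H$, suppose $H$ splits into at least two components. Since every vertex of $H$ has a neighbor on the opposite side within $H$, each component meets both $S$ and $T$. Partition $S = S_1 \cup \cdots \cup S_r$ according to the components, where $r \geq 2$ and every $S_i$ is nonempty; the $H$-neighborhood of $S_i$ is exactly the $T$-side of that component, so $\sum_i |N(S_i)| = |T| = |S| = \sum_i |S_i|$. Hence some $S_i$ satisfies $|S_i| \geq |N(S_i)|$ while $\emptyset \neq S_i \subsetneq S$, contradicting minimality. Since $H$ is connected with both parts nonempty, $(S,T)$ is its unique bipartition.

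It remains to show that the maximum independent sets of $H$ are exactly $S$ and $T$. Let $U$ be any independent set in $H$ and split $U$ as $U_S := U \cap S$ and $U_T := U \cap T$. Independence forces $U_T \cap N(U_S) = \emptyset$, whence $|U| \leq |U_S| + n - |N(U_S)|$. If $U_S = \emptyset$, this reads $|U| \leq n$ with equality iff $U = T$; if $U_S = S$, then $|N(U_S)| = n$ forces $U_T = \emptyset$ and $U = S$; otherwise $\emptyset \neq U_S \subsetneq S$ and minimality gives $|N(U_S)| \geq |U_S| + 1$, so $|U| \leq n - 1$. Hence $S$ and $T$ are the only independent sets of size $n$, they are the maximum ones, and $H$ is bistable of rank $n$. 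The only point requiring care is ensuring that every subset to which minimality is applied is a \emph{nonempty proper} subset of $S$; this is handled by the no-isolated-vertex hypothesis (in the $|S| = |T|$ step) and by $r \geq 2$ (in the connectedness step), and it is the reason the lemma's hypotheses are stated as they are.
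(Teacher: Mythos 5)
Your proof is correct, but it takes a genuinely different route from the paper. The paper's proof goes through matchings: it first invokes Hall's condition (via K\H{o}nig) to show that minimality forces a matching covering~$S$, uses that matching to prove balance, upgrades it to a perfect matching, and then rules out a third maximum independent set~$X$ by observing that~$X$ must pick exactly one endpoint of every matching edge, so that~$\hS = X \cap I$ would be a nonempty proper subset of~$S$ with~$|N(\hS)| \leq |\hS|$. You avoid matchings entirely: from minimality you extract the single inequality~$|N(S')| \geq |S'|+1$ for every nonempty proper~$S' \subsetneq S$, derive balance directly (using the no-isolated-vertex hypothesis to handle~$|S|=1$), prove connectivity by a pigeonhole argument over the component-wise partition of~$S$ and~$T$ (the paper's connectivity step is a close cousin, arguing per component without assuming balance first), and then bound an arbitrary independent set~$U$ by~$|U| \leq |U_S| + |T| - |N(U_S)|$, which together with the minimality inequality kills every candidate other than~$S$ and~$T$. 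Your counting argument is more elementary and self-contained; the paper's detour through matchings is slightly longer here but yields the perfect-matching structure as a byproduct, which is in the spirit of Lemma~\ref{lem:bistable-property} (where it is in any case re-proved for all bistable graphs via K\H{o}nig), so nothing downstream is lost by your shortcut. One small point of care that you do handle correctly: every application of minimality is to a nonempty proper subset of~$S$, justified by the no-isolated-vertex hypothesis in the balance step and by~$r \geq 2$ in the connectivity step.
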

\begin{proof}
We have to show that the graph~$G'$ induced by the vertices from~$S$ and their neighborhood satisfies all conditions for being bistable. Since~$G$ is bipartite,~$G'$ is as well. Before proving the remaining properties, we establish the following claim.

\begin{claim} \label{claim:bistable:matching}
The graph~$G'$ has a matching covering~$S$.
\end{claim}
\begin{claimproof}
Assume for a contradiction that no such matching exists. By Fact~\ref{fact:konig}, there is a set~$S' \subseteq S$ with~$|N_G(S')| = |N_{G'}(S')| < |S'|$. As~$G$ has no isolated vertices, we have~$|S'| > 1$. Removing an arbitrary vertex from~$S'$ to obtain~$S''$ then decreases the size of the set by at most one without increasing the neighborhood size. Hence~$|N_G(S'')| \leq |S''|$, a contradiction to the minimality of~$S$.
\end{claimproof}

We now show that~$G'$ has all properties of a bistable graph.

\subparagraph{Connectivity.} Assume for a contradiction that~$G'$ is not connected. If~$G'$ has a connected component with vertex set~$C$ that contains at least as many $I$-vertices as $J$-vertices, then~$S' := C \cap I$ is a strict subset of~$S$ with~$|S'| \geq |C \cap J| \geq |N_{G'}(S')| = |N_G(S')|$, contradicting minimality of~$S$. Otherwise, all connected components of~$G'$ have strictly more~$J$-vertices than~$I$-vertices. Since the $J$-vertices in~$G'$ form the neighborhood of~$S$, this implies that~$|N_G(S)| > |S|$, a contradiction to the choice of~$S$. Hence~$G'$ is connected.

\subparagraph{Balance.} Since~$G'$ is connected, it has a unique bipartition and it is easy to verify that~$S$ is one of the partite sets: $G' = (S \cup J', E')$. Since there is a matching covering~$S$ (Claim~\ref{claim:bistable:matching}) and all matching partners of vertices in~$S$ are distinct and belong to~$J'$, we therefore have~$|J'| \geq |S|$. We have~$|J'| = |N_G(S)| \leq |S|$ by assumption on~$S$, establishing that~$|J'| = |S|$ which proves~$G'$ is balanced. This implies that a matching in~$G'$ that saturates~$S$ (which exists by Claim~\ref{claim:bistable:matching}) is in fact a perfect matching in~$G'$.

\subparagraph{Two maximum independent sets.} Assume for a contradiction that~$G'$ has at least three maximum independent sets. Then there is a maximum independent set in~$G$ that is not equal to either of the two partite sets~$J'$ or~$S$; let~$X$ be such a maximum independent set. Since~$G'$ is bipartite and has a perfect matching~$M$, the set~$X$ contains exactly one vertex from each matching edge in~$M$. Now let~$\hS := X \cap I$. Since~$X \neq J'$ by assumption, it follows that~$\hS$ is not empty; since~$X \neq S$, it is a proper subset of~$S$. We show that~$|N_G(\hS)| \leq |\hS|$, contradicting minimality of~$S$.

Let~$M' \subseteq M$ denote the matching edges intersected by~$\hS$. Since~$X$ contains one vertex from each edge of~$M$, for all edges in~$M \setminus M'$ the $J$-endpoint of the edge belongs to the independent set~$X$. So the $J$-endpoint of an edge in~$M \setminus M'$ is not in the neighborhood of~$\hS$, as~$X$ is independent. Consequently, only the matching partners of~$\hS$ can be in the neighborhood of~$\hS$, implying there are at most~$|\hS|$ such neighbors. Hence~$|N_G(\hS)| \leq |\hS|$; a contradiction. It follows that~$G'$ has at most two maximum independent sets. To see that it has exactly two, it suffices to observe that since~$G'$ has a perfect matching, both its partite sets are maximum independent sets.

This establishes that~$G'$ satisfies all conditions for being bistable and concludes the proof of Lemma~\ref{lemma:bistable}.
\end{proof}

Now, in the lemma below, we prove two key properties of bistable graphs. They will later be useful to relate the quantities~$\Pum(G)$ and~$\bistable(G)$.

\begin{lemma}\label{lem:bistable-property}
Let $G = (I\cup J,E)$ be a bistable graph. Then the following holds:
\begin{enumerate}
\item $G$ has a perfect matching covering $I$ (and hence $J$). \label{prop:bistable:pm}
\item $G$ is biconnected. \label{prop:bistable:biconnected}
\end{enumerate}
\end{lemma}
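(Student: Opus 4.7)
The plan is to prove Property~1 via K\H{o}nig's theorem, then prove Property~2 by contradiction using a tight-set description of the maximum independent sets.

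For Property~1, the definition of bistable tells us $I$ and $J$ are both maximum independent sets, so $|I| = |J| =: n$ and $\alpha(G) = n$. K\H{o}nig's theorem gives $\alpha(G) + \nu(G) = |V(G)| = 2n$, where $\nu(G)$ is the maximum matching number, so $\nu(G) = n$. Such a matching is contained in the bipartite edge set between $I$ and $J$ and has size $n = |I| = |J|$, hence it is perfect and covers both partite sets.

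For Property~2, I first record a tight-set characterization of the maximum independent sets. Using Property~1 together with Fact~\ref{fact:konig}, every $A \subseteq I$ satisfies $|N(A)| \geq |A|$; and a short direct calculation shows that the maximum independent sets of size $n$ are in bijection with subsets $A \subseteq I$ satisfying $|N(A)| = |A|$ (call these \emph{tight}), via $S \mapsto S \cap I$ and $A \mapsto A \cup (J \setminus N(A))$. Since $G$ is bistable, the only tight subsets are $\emptyset$ (yielding $J$) and $I$ (yielding $I$); hence every $A$ with $\emptyset \subsetneq A \subsetneq I$ satisfies $|N(A)| \geq |A| + 1$.

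Now suppose for contradiction that $v$ is a cut vertex of $G$; without loss of generality $v \in I$. Let $C_1, \ldots, C_k$ (with $k \geq 2$) be the components of $G - v$, and set $a_i := |C_i \cap I|$, $b_i := |C_i \cap J|$, and $A_i := C_i \cap I$. Since $v \notin J$ and $C_i$ is a component of $G - v$, every neighbor of $A_i$ in $J$ lies in $C_i$, so $|N(A_i)| \leq b_i$. If $a_i \geq 1$, then $A_i$ is a proper nonempty subset of $I$ (proper because $v \notin A_i$), hence not tight, giving $b_i \geq a_i + 1$. If $a_i = 0$, then $C_i \subseteq J$ is an edgeless connected subgraph, which forces $|C_i| = 1$ and $b_i - a_i = 1$. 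In both cases $b_i - a_i \geq 1$, so $\sum_i (b_i - a_i) \geq k \geq 2$; but $|I| = 1 + \sum_i a_i$ and $|J| = \sum_i b_i$ combined with $|I| = |J|$ give $\sum_i (b_i - a_i) = 1$, a contradiction.

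The main obstacle I expect is establishing the tight-set characterization cleanly from Fact~\ref{fact:konig}; once this is in hand, the cut-vertex analysis reduces to a short counting argument, and the existence of the perfect matching is immediate from K\H{o}nig.
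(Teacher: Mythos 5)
Your proof is correct. For part (1) you argue essentially as the paper does: both partite sets are maximum independent sets of common size $n$, and K\H{o}nig's theorem (equivalently $\alpha(G)+\nu(G)=|V(G)|$ for bipartite graphs) yields a matching of size $n$, which is therefore perfect. For part (2) your route genuinely differs. The paper takes the perfect matching $M$, considers the components of $G-v$ for a cutvertex $v\in I$ with matching partner $u$, notes that the component containing $u$ has one more $J$-vertex than $I$-vertices while every other component is balanced (matching partners stay within components), and then explicitly assembles a third maximum independent set (the $J$-side of $u$'s component together with the $I$-sides of the other components), contradicting Definition~\ref{def:bistable}. You instead first distill from bistability a strict Hall property: the correspondence $S\mapsto S\cap I$, $A\mapsto A\cup(J\setminus N(A))$ is indeed a bijection between maximum independent sets and tight sets $A\subseteq I$ (for a maximum $S$, $S\cap J\subseteq J\setminus N(S\cap I)$ forces $|N(S\cap I)|\leq |S\cap I|$, and Fact~\ref{fact:konig} applied to the matching covering $I$ gives the reverse inequality), so the only tight sets are $\emptyset$ and $I$ and every nonempty proper $A\subsetneq I$ satisfies $|N(A)|\geq |A|+1$; the cutvertex is then eliminated by a per-component counting argument whose contradiction is $|I|\neq|J|$ rather than the existence of a third maximum independent set. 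Your ``without loss of generality $v\in I$'' is legitimate since the characterization holds symmetrically for subsets of $J$. The trade-off: the paper's argument is more self-contained, with the matching doing all the work and no auxiliary characterization needed, whereas your tight-set lemma is a slightly stronger, reusable structural statement about bistable graphs (essentially the converse of the mechanism behind Lemma~\ref{lemma:bistable}) that reduces the cutvertex step to a two-line count.
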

\begin{proof}
(\ref{prop:bistable:pm}) By K\H{o}nig's theorem (cf.~\cite[Thm. 16.2]{Schrijver03}), the size of a maximum matching in the bipartite graph~$G$ equals the size of a minimum vertex cover in~$G$. By Definition~\ref{def:bistable}, the partite sets~$I$ and~$J$ are maximum independent sets and therefore have equal size. Since the complement of a maximum independent set is a minimum vertex cover, it follows that~$V(G) \setminus I = J$ is a minimum vertex cover. Hence there is a matching of size~$|J| = |I|$ in~$G$, which is a perfect matching since it covers~$2|J| = |V(G)|$ vertices.

(\ref{prop:bistable:biconnected}) Assume for a contradiction that $G$ is not biconnected. Let $v$ be a cutvertex and let~$M$ be a perfect matching in~$G$, which exists by the previous property. Assume that~$v \in I$; the argument for~$v \in J$ is symmetric. Let~$u$ be the matching partner of~$v$ under~$M$. Since~$v$ is a cutvertex, the graph~$G - \{v\}$ consists of multiple connected components~$C_1, \ldots, C_\ell$. Without loss of generality, assume that~$u$ is contained in component~$C_1$. For all components~$C_i$ with~$i \geq 2$, the component contains the same number of~$I$ and~$J$-vertices: for each vertex its matching partner in the opposite partite set belongs to the same component. For component~$C_1$ the number of~$I$-vertices is one smaller than the number of $J$-vertices, since the matching partner of~$u$ does not belong to~$C_1$. Consider the set~$S$ consisting of the $J$-vertices from~$C_1$ along with the~$I$-vertices of all other components~$C_2, \ldots, C_\ell$ of~$G - \{v\}$. The set~$S$ is independent in~$G- \{v\}$ since it consists of entire partite sets of different components of the bipartite graph. Since~$v \not \in S$ it follows that~$S$ is also independent in~$G$. As~$S$ contains exactly one endpoint from each edge in~$M$ (the $I$-endpoint for matching edges intersecting a component~$C_i$ for~$i \geq 2$, and the $J$-endpoint for the remaining matching edges) it follows that~$S$ is a maximum independent set in~$G$ that differs from~$I$ and~$J$; a contradiction to Definition~\ref{def:bistable}.
\end{proof}

\begin{theorem} \label{thm:bistable}
For any graph~$G$ it holds that~$\MTJ(G) \leq \bistable(G)$. Moreover, if~$G \neq K_1$, then there exists an induced subgraph~$G'$ of~$G$ with~$\MTJ(G') \geq \bistable(G) \geq \bistable(G')$.
\end{theorem}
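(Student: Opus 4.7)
The statement splits into an upper bound $\MTJ(G) \leq \bistable(G)$ and the existence of an induced subgraph $G'$ satisfying $\MTJ(G') \geq \bistable(G) \geq \bistable(G')$, where the second inequality is immediate from the induced-subgraph relation. I would attack the two bounds independently: the upper bound via induction exploiting Lemma~\ref{lemma:bistable}, and the lower bound by constructing $G'$ explicitly.

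For the upper bound, the plan is induction on $|V(G)|$. By Proposition~\ref{prop:balancedbip} I may reduce to a balanced bipartite graph $H$ with partite sets $A, B$ where I must reconfigure $A$ into $B$. If $H$ has no edges, every subset of $V(H)$ is independent and one can reconfigure via $|A|$ consecutive size-$1$ jumps. If $H$ has at least one edge and no isolated vertex, I pick an inclusion-minimal non-empty $T \subseteq B$ with $|T| \geq |N_H(T)|$; such $T$ exists since $B$ itself qualifies. Lemma~\ref{lemma:bistable}, applied with the roles of the two partite sets swapped, yields that $H[T \cup N_H(T)]$ is bistable, and in particular balanced, forcing $|T| = |N_H(T)|$. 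This common value is at most $\bistable(G)$ because $H[T \cup N_H(T)]$ is an induced bistable subgraph of $G$. A single jump of this size moves the tokens on $N_H(T)$ onto $T$ and yields the independent set $(A \setminus N_H(T)) \cup T$: independence holds because every neighbor of $T$ in $H$ lies in the removed set $N_H(T)$. The problem reduces to reconfiguring $A \setminus N_H(T)$ into $B \setminus T$ inside the strictly smaller balanced bipartite graph $H - (T \cup N_H(T))$, to which I apply induction; re-inserting $T$ into every intermediate set preserves independence since $N_H(T)$ has been removed from the remaining graph.

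The main obstacle is the isolated-vertex case, which Lemma~\ref{lemma:bistable} explicitly excludes. I resolve it with size-$1$ moves that exploit the absence of edges at an isolated vertex: if some $v \in B$ is isolated, I jump any $u \in A$ onto $v$ and recurse on $H - \{u, v\}$; if instead some $v \in A$ is isolated, I pick any $w \in B$, inductively reconfigure $A \setminus \{v\}$ into $B \setminus \{w\}$ inside $H - \{v\}$ while keeping $v$ in every intermediate set (valid because $v$ has no neighbors in $H$), and conclude with a single size-$1$ jump of $v$ onto $w$. In either situation the recursion acts on a strictly smaller graph, so the induction closes with all jumps bounded by $\bistable(G)$.

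For the lower bound, if $G$ has no edges then $\bistable(G) = 1$ by convention and any two-vertex induced subgraph $G'$ of $G$ (which exists because $G \neq K_1$) has $\MTJ(G') = 1$. Otherwise, I let $G'$ be an induced bistable subgraph of $G$ of maximum rank $\bistable(G)$ with partite sets $I', J'$. By Definition~\ref{def:bistable} these are the only two maximum independent sets of $G'$, so any reconfiguration sequence in $G'$ consisting of independent sets of cardinality $\bistable(G)$ must at some step pass directly from $I'$ to $J'$; this step is a jump of size $|I' \setminus J'| = |I'| = \bistable(G)$ since the partite sets are disjoint. Hence $\MTJ(G') \geq \bistable(G)$, while $\bistable(G) \geq \bistable(G')$ is immediate from the induced-subgraph relation.
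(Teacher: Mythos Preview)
Your proposal is correct and follows essentially the same strategy as the paper: reduce via Proposition~\ref{prop:balancedbip} to a balanced bipartite graph, pick an inclusion-minimal set in one partite class whose neighborhood is no larger, invoke Lemma~\ref{lemma:bistable} to bound the jump size by $\bistable(G)$, and induct; for the lower bound, take a maximum-rank induced bistable subgraph and observe that its two partite sets are the only maximum independent sets. Your treatment of the isolated-vertex case is more explicit than the paper's (you separately handle an isolated vertex on the source side versus the target side, whereas the paper dispatches this in one line), but this is a matter of exposition rather than a different idea.
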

\begin{proof}
We first prove the lower bound on~$\MTJ(G)$.
Note that if~$G \neq K_1$ contains no induced bistable subgraphs, then $G$ is a collection of isolated vertices, and in that case $\MTJ(G)=\bistable(G)=1$.
Assume then that~$G$ contains a nonempty induced bistable subgraph
$G' = (I' \cup J', E')$ of rank~$\bistable(G)$.
By Definition~\ref{def:bistable}, the sets~$I'$ and~$J'$ are the only independent sets of size~$\bistable(G)$ in~$G'$. It follows that in any MTJ reconfiguration sequence from~$I'$ to~$J'$, the set~$I'$ is immediately followed by~$J'$ which requires a jump of~$|I'| = |J'| = \bistable(G)$ tokens simultaneously. Hence~$\MTJ(G') \geq \bistable(G)$.

We prove the upper bound on~$\MTJ(G)$ by induction on the size of the graph. If~$G$ consists of a single vertex, then there is a unique nonempty independent set, so~$\MTJ(G) = 0$.
In the remainder, assume~$G$ has more than one vertex and let~$I$ and~$J$ be two independent sets in~$G$ of equal size.
By Proposition~\ref{prop:balancedbip} it suffices to prove that~$I' := I \setminus J$ can be MTJ-reconfigured to~$J \setminus I$ in the graph~$G' := G[I \Delta J]$ with jumps of size at most~$\bistable(G)$.
Assume first that~$G'$ has no isolated vertices, and let~$S \subseteq I \setminus J$ be an inclusion-wise minimal nonempty subset of~$I \setminus J$ with the property that~$|S| \geq |N_{G'}(S)|$. Such a set exists since~$G'$ is a balanced bipartite graph with partite sets~$I \setminus J$ and~$J \setminus I$, so the set~$I \setminus J$ satisfies the stated condition (but may not yet be minimal).
It is easy to verify that since~$G'$ has no isolated vertices and~$S$ is minimal, we have~$|N_{G'}(S)| = |S|$. Now move all tokens from~$N_{G'}(S)$ onto~$S$ in a single jump of size~$|S|$. By Lemma~\ref{lemma:bistable}, the graph~$G'[N_{G'}[S]] = G[N_{G'}[S]]$ is a bistable induced subgraph of~$G$ of rank~$|S|$, and therefore~$\bistable(G) \geq |S|$ which shows that the size of the jump is sufficiently small.
We may then invoke induction similarly as in the proof of Theorem~\ref{thm:max-match} to complete the argument. If~$G'$ has an isolated vertex, then instead one can jump a token onto this isolated vertex and induct. This concludes the proof of Theorem~\ref{thm:bistable}.
\end{proof}

The following corollary characterizes the MTJ reconfiguration threshold of hereditary graph classes. It follows directly from Theorem~\ref{thm:bistable}. It applies to all graph classes except the one consisting only of the single graph~$K_1$ with a single vertex, for which the reconfiguration threshold is zero but~$\bistable(K_1) = 1$ by definition.

\begin{corollary}
For any hereditary graph class~$\Pi \neq \{K_1\}$ it holds that~$\MTJ(\Pi) = \bistable(\Pi)$.
\end{corollary}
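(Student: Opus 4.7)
The corollary follows by combining the two assertions of Theorem~\ref{thm:bistable} with the hereditary property of~$\Pi$, so the proof will be essentially a supremum manipulation rather than a new combinatorial argument.

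For the upper bound, I would take an arbitrary $G\in\Pi$ and apply the first assertion of Theorem~\ref{thm:bistable}, which yields $\MTJ(G)\le \bistable(G)\le \bistable(\Pi)$; taking the supremum over $G\in\Pi$ on the left-hand side gives $\MTJ(\Pi)\le \bistable(\Pi)$. For the lower bound, I would pick an arbitrary $G\in\Pi$ with $G\neq K_1$ and apply the second assertion of Theorem~\ref{thm:bistable}, which furnishes an induced subgraph~$G'$ of~$G$ with $\MTJ(G')\ge \bistable(G)$. Because~$\Pi$ is hereditary we have $G'\in\Pi$, and hence $\MTJ(\Pi)\ge \MTJ(G')\ge \bistable(G)$; taking the supremum over all such $G$ gives $\MTJ(\Pi)\ge \sup_{G\in\Pi,\,G\neq K_1}\bistable(G)$.

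The only point that really requires attention, and the only place the hypothesis $\Pi\neq\{K_1\}$ is used, is to verify that excluding~$K_1$ from the last supremum does not decrease its value, so that the right-hand side is actually $\bistable(\Pi)$. This is immediate from the convention that $\bistable(K_1)=1$ together with the observation that $\bistable(H)\ge 1$ for every graph~$H$: the hypothesis guarantees that some $G\in\Pi$ with $G\neq K_1$ exists, and for this graph $\bistable(G)\ge 1= \bistable(K_1)$, so $\sup_{G\in\Pi,\,G\neq K_1}\bistable(G)=\sup_{G\in\Pi}\bistable(G)=\bistable(\Pi)$. Combining the two bounds yields $\MTJ(\Pi)=\bistable(\Pi)$. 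Since the substantive combinatorial work has already been absorbed into Theorem~\ref{thm:bistable}, I do not foresee any further obstacle in executing this plan.
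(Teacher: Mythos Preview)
Your proof is correct and follows essentially the same approach as the paper: both derive the upper bound directly from the first part of Theorem~\ref{thm:bistable} and the lower bound by passing to an induced subgraph (in $\Pi$, by heredity) that witnesses $\MTJ\ge\bistable(G)$. The only cosmetic difference is that the paper splits the lower-bound case according to whether $G$ has an edge (explicitly naming the maximal bistable induced subgraph $H$) while you split according to $G\neq K_1$ and invoke the second assertion of Theorem~\ref{thm:bistable} as a black box; the handling of the degenerate edgeless case is likewise equivalent.
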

\begin{proof}
By Theorem~\ref{thm:bistable} we have~$\MTJ(G) \leq \bistable(G) \leq \bistable(\Pi)$ for all graphs~$G \in \Pi$, hence $\MTJ(\Pi) \leq \bistable(\Pi)$. To prove the converse, consider an arbitrary graph~$G \in \Pi$ having at least one edge. Then~$G$ contains an induced bistable subgraph~$H = (I \cup J, E)$ of rank~$\bistable(G)$, and since~$\Pi$ is hereditary we have~$H \in \Pi$. Reconfiguring~$I$ to~$J$ in~$H$ requires a jump of size~$|I| = |J| = \bistable(G)$ since those are the only two independent sets of that size. Hence~$\MTJ(\Pi) \geq \bistable(G)$ for all~$G \in \Pi$ with at least one edge, showing that~$\MTJ(\Pi) \geq \bistable(\Pi)$ if~$\Pi$ contains at least one bistable graph. In the exceptional setting that~$\Pi$ contains no bistable graph, all graphs in~$\Pi$ are edgeless causing~$\bistable(\Pi)$ to be one. Since~$\Pi \neq \{K_1\}$, the graph consisting of two isolated vertices is contained in~$\Pi$, which has reconfiguration threshold one. Hence the lower bound also holds in this case.
\end{proof}

\subsection{MTJ Reconfiguration Threshold in Terms of Pumpkin Size} \label{sect:mtj:pumpkin}
In this section we formally introduce the pumpkin structure described in the introduction. We relate pumpkins to bistable graphs to obtain bounds on the MTJ reconfiguration threshold in terms of the size of the largest pumpkin subgraph.

\begin{definition}[{Pumpkin}]
A \emph{pumpkin} is a graph consisting of two terminal vertices~$u$ and~$v$ linked by two or more vertex-disjoint paths with an odd number of edges, having no edges or vertices other than those on the paths. A path can consist of the single edge~$\{u,v\}$. The \emph{size} of the pumpkin is the total number of vertices.

For a graph $G$ we denote by $\Pum(G)$ the size of the largest (not necessarily induced) subgraph isomorphic to a pumpkin that is contained in $G$, or zero if~$G$ contains no pumpkin. For a graph class $\Pi$ we define $\Pum(\Pi):=\sup_{G\in\Pi}\Pum(G)$.
\end{definition}
An example of a pumpkin structure is shown in Figure~\ref{fig:pumpkin}. Observe that a pumpkin is a bipartite graph, since all cycles consist of two $uv$-paths of odd length and are therefore even. Furthermore, a pumpkin is a \emph{balanced} bipartite graph: vertices~$u$ and~$v$ belong to different partite sets since their distance is odd, and on every (odd-length) $uv$-path in the structure there is an even number of interior vertices, which alternate between the two partite sets. It is not difficult to verify that the two partite sets are the only maximum independent sets in a pumpkin, leading to the following observation.

\begin{observation} \label{obs:pumpkin:bistable}
Every pumpkin graph is bistable.
\end{observation}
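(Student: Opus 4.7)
The plan is to verify each of the three defining properties of a bistable graph for a pumpkin $G$ with terminal vertices $u, v$ linked by internally vertex-disjoint odd-length paths $P_1, \ldots, P_k$, writing $P_i$ as having $2m_i + 1$ edges (hence $2m_i$ interior vertices), with $k \geq 2$. Connectivity is immediate since every vertex lies on some $P_i$. Bipartiteness has already been observed in the text preceding the statement; the unique bipartition $I \cup J$ places $u \in I$, $v \in J$ (they are at odd distance), and the interior vertices of each $P_i$ alternate between the parts, contributing $m_i$ vertices to each side. Thus $|I| = |J| = 1 + \sum_i m_i$, and the total vertex count is $n = 2 + 2\sum_i m_i$.

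To confirm that $I$ and $J$ are maximum independent sets I would exhibit a perfect matching of $G$: on $P_1$, match $u$ to its neighbor, pair up the remaining $2m_1 - 2$ interior vertices consecutively, and match the final interior vertex to $v$ (this fits because $P_1$ has odd length); on each $P_i$ with $i \geq 2$, simply pair the $2m_i$ interior vertices into $m_i$ consecutive couples. König's theorem applied to the bipartite graph $G$ then gives $\alpha(G) = n - \mu(G) = n/2 = |I| = |J|$, so both partite sets are maximum independent sets.

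The heart of the argument is to rule out any other maximum independent set. I would take an arbitrary independent set $X$ of size $n/2 = 1 + \sum_i m_i$ and case-split on $X \cap \{u, v\}$. The case $u, v \in X$ fails because either some $P_i$ is the single edge $\{u,v\}$ (directly forcing $u \sim v$), or, on every $P_i$, removing both neighbors of $u$ and $v$ from the interior leaves a path of $2m_i - 2$ vertices with maximum independent set of size $m_i - 1$, yielding $|X| \leq 2 + \sum_i (m_i - 1) = 2 - k + \sum_i m_i < n/2$ because $k \geq 2$. The case $u, v \notin X$ caps $|X|$ at $\sum_i m_i < n/2$. So exactly one of $u, v$ lies in $X$; by symmetry assume $u \in X$ and $v \notin X$. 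Then on each $P_i$ the neighbor of $u$ is forbidden, and $X$ must still contribute $m_i$ vertices from the remaining interior path on $2m_i - 1$ vertices. Since an odd-length path admits a \emph{unique} maximum independent set (namely the vertices at odd indices along it), this forces $X$ to contain exactly the $I$-vertices of each $P_i$, so $X = I$.

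The step I expect to be most delicate is this final uniqueness argument: it rests on the fact that a path on an odd number of vertices has a unique maximum independent set (quick induction on the length), and it also requires care for the degenerate case $m_i = 0$ in which $P_i$ is the single edge $\{u,v\}$, where the contribution of $P_i$ to the size count is vacuous and the adjacency of $u$ and $v$ itself rules out $u, v \in X$. Once these pieces are in place, $G$ satisfies all conditions of Definition~\ref{def:bistable} and the observation follows.
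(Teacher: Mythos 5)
Your proof is correct and follows the same route the paper sketches: it establishes bipartiteness and balancedness via the parity of the odd paths, and then verifies directly the claim the paper leaves as "not difficult to verify," namely that the two partite sets are the only maximum independent sets (your case analysis on $X \cap \{u,v\}$, together with the uniqueness of the maximum independent set in an odd-vertex path, does this cleanly, including the degenerate single-edge path). The only cosmetic quibble is that the matching description on $P_1$ should be phrased to cover $m_1 = 0$ (just match $u$ to $v$ along that edge), which does not affect correctness.
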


The next theorem shows that the rank of the largest bistable induced subgraph of~$G$ can be upper-bounded in terms of the size of~$G$'s largest pumpkin subgraph.

\begin{theorem}\label{thm:bistable-pumpkin}
For any bistable graph $G$, $\bistable(G)\leq f(\Pum(G))$, where $f(k)=(k^3+k^2)^{k^2+1}+1.$
\end{theorem}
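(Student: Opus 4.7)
I argue the contrapositive: if $G$ is bistable with $\bistable(G) > f(\Pum(G))$, then $G$ already contains a pumpkin larger than $\Pum(G)$, a contradiction. Write $k := \Pum(G)$ and $r := \bistable(G)$. The key preliminary observation is that every even cycle of length $2\ell$ in the bipartite graph $G$ is itself a pumpkin of size $2\ell$: taking any edge of the cycle as a length-$1$ odd path and the complementary length-$(2\ell-1)$ path yields two vertex-disjoint odd paths between the same pair of terminals. Consequently every cycle of $G$ has length at most $k$.

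\textbf{Auxiliary multigraph.} By Lemma~\ref{lem:bistable-property}, $G$ has a perfect matching $M$ and is biconnected. Contract the edges of $M$ to obtain a multigraph $H$ with $|V(H)|=r$ vertices, whose edges correspond to the non-matching edges of $G$. Walks in $H$ lift to $M$-alternating walks in $G$; in particular, a shortest path of length $D$ in $H$ lifts to an $M$-alternating $G$-path of length $2D+1$ whose two endpoints lie in opposite partite sets of $G$.

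\textbf{Diameter and branching of $H$.} I first bound $\mathrm{diam}(H)$: given a shortest $H$-path of length $D$, lift it to an alternating $G$-path $P$ of odd length $2D+1$; biconnectivity of $G$ supplies an internally vertex-disjoint second path $Q$ between $P$'s endpoints, necessarily of odd length by bipartiteness, so $P\cup Q$ is a cycle of length at least $2D+2$ and at most $k$, forcing $D \leq (k-2)/2 \leq k^2$. I then claim that every vertex of $H$ has at most $k^3+k^2$ neighbors: otherwise, a pigeonhole on the attachment patterns of the corresponding non-matching edges, combined with repeated biconnectivity invocations that reroute through previously unused vertices, produces sufficiently many vertex-disjoint odd paths between a single pair of $G$-terminals to assemble a pumpkin of size greater than $k$. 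Once both bounds are in hand, BFS-layering from any vertex of $H$ yields
\begin{equation*}
r \;=\; |V(H)| \;\leq\; \sum_{i=0}^{k^2}(k^3+k^2)^i \;\leq\; (k^3+k^2)^{k^2+1}+1 \;=\; f(k).
\end{equation*}

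\textbf{Main obstacle.} The branching bound is where the real combinatorial work lives: translating a local excess of non-matching edges incident to a single matching edge into many vertex-disjoint odd $G$-paths sharing common terminals, all while respecting the cycle-length budget of $k$, requires careful pigeonholing on attachment patterns together with coordinated, repeated uses of biconnectivity to avoid vertex reuse. The diameter estimate and the concluding counting step are essentially mechanical once the branching bound is established, and the generous shape of $f(k)$ leaves enough slack at each reduction that precise constants need never be tracked.
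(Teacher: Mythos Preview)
Your proposal has two substantive gaps. First, the diameter bound is not justified: biconnectivity guarantees that any two vertices lie on a common cycle, but it does \emph{not} guarantee that a prescribed path $P$ can be completed to a cycle by a second internally vertex-disjoint path~$Q$. In a $2$-connected graph one can have a long path whose endpoints are joined only by short cycles (for instance, in $K_{2,3}$ the length-$4$ path $a\text{--}u\text{--}b\text{--}v\text{--}c$ has no internally disjoint companion). The paper handles this by invoking Dirac's theorem---a path of length exceeding $L^2$ in a $2$-connected graph forces a cycle of length exceeding~$L$---which is genuinely the right tool and explains the exponent~$k^2$ in~$f$; your ``$D\le (k-2)/2$'' is simply false in general.

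Second, and more seriously, the branching bound is asserted but not argued. You correctly identify it as the heart of the proof, yet offer only a sketch invoking ``pigeonhole on attachment patterns'' and ``repeated biconnectivity invocations.'' The paper's argument here is quite different and more structured: it builds a DFS tree, then classifies the children of each node into three types and bounds each type separately. Crucially, one of these types (type~C) is ruled out \emph{only} by the defining property of bistable graphs that there are exactly two maximum independent sets---a type-C subtree would let one swap partite sets locally to produce a third maximum independent set. Your framework uses only the perfect matching and biconnectivity consequences of bistability (Lemma~\ref{lem:bistable-property}) and never invokes the two-maximum-IS property; without it there is no evident reason the degree bound in~$H$ should hold, and indeed for general biconnected balanced bipartite graphs with a perfect matching it is unclear that it does. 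So as written the proposal is not a proof: the diameter step needs Dirac's theorem, and the branching step needs an actual argument that exploits the full bistability hypothesis.
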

\begin{proof}
Consider a bistable graph~$G = (I \cup J, E)$. If~$G$ is acyclic, then any biconnected subgraph of~$G$ contains at most two vertices. There is a unique bistable graph with at most two vertices, which consists of a single edge and has rank one. Since any bistable graph is biconnected by Lemma~\ref{lem:bistable-property}, we have $\bistable(G) \leq 1 = f(0) \leq f(\Pum(G))$ if~$G$ is acyclic. In the remainder we assume that~$G$ contains a cycle, which implies that~$\Pum(G) \geq 1$ since any cycle in the bipartite graph~$G$ is even and forms a pumpkin.

For ease of notation, define~$L := \Pum(G)$. Construct a depth-first search (DFS) tree~$T$ of $G$, starting at an arbitrary vertex~$r$ which becomes the root of the tree. By the structure of the DFS process, we obtain the following property: if~$u$ and~$v$ are vertices of~$G$ that are adjacent in~$G$, then~$u$ is an ancestor of~$v$ in~$T$, or~$v$ is an ancestor of~$u$. For~$v \in T$, we use~$T_v$ to denote the subtree of~$T$ rooted at~$v$. We will often use~$T_v$ to refer to the vertices in the tree as well.

\begin{claim} \label{claim:depth}
The depth of $T$ is at most $L^2$.
\end{claim}
\begin{claimproof}
Assume for a contradiction that there is a path from the root~$r$ of~$T$ to a leaf~$\ell$, consisting of more than~$L^2$ edges. By Lemma~\ref{lem:bistable-property}, graph~$G$ is biconnected. The existence of a path of more than~$L^2$ edges in a biconnected graph~$G$ is known~\cite[Theorem 1]{D52} to imply that~$G$ contains a simple cycle of length more than~$L$. Since~$G$ is bipartite the cycle is even and forms a pumpkin: it splits into two odd paths. So~$\Pum(G) > L$, a contradiction.
\end{claimproof}

\begin{figure}
\begin{subfigure}{5cm}
\begin{center}
\includegraphics[scale=0.35]{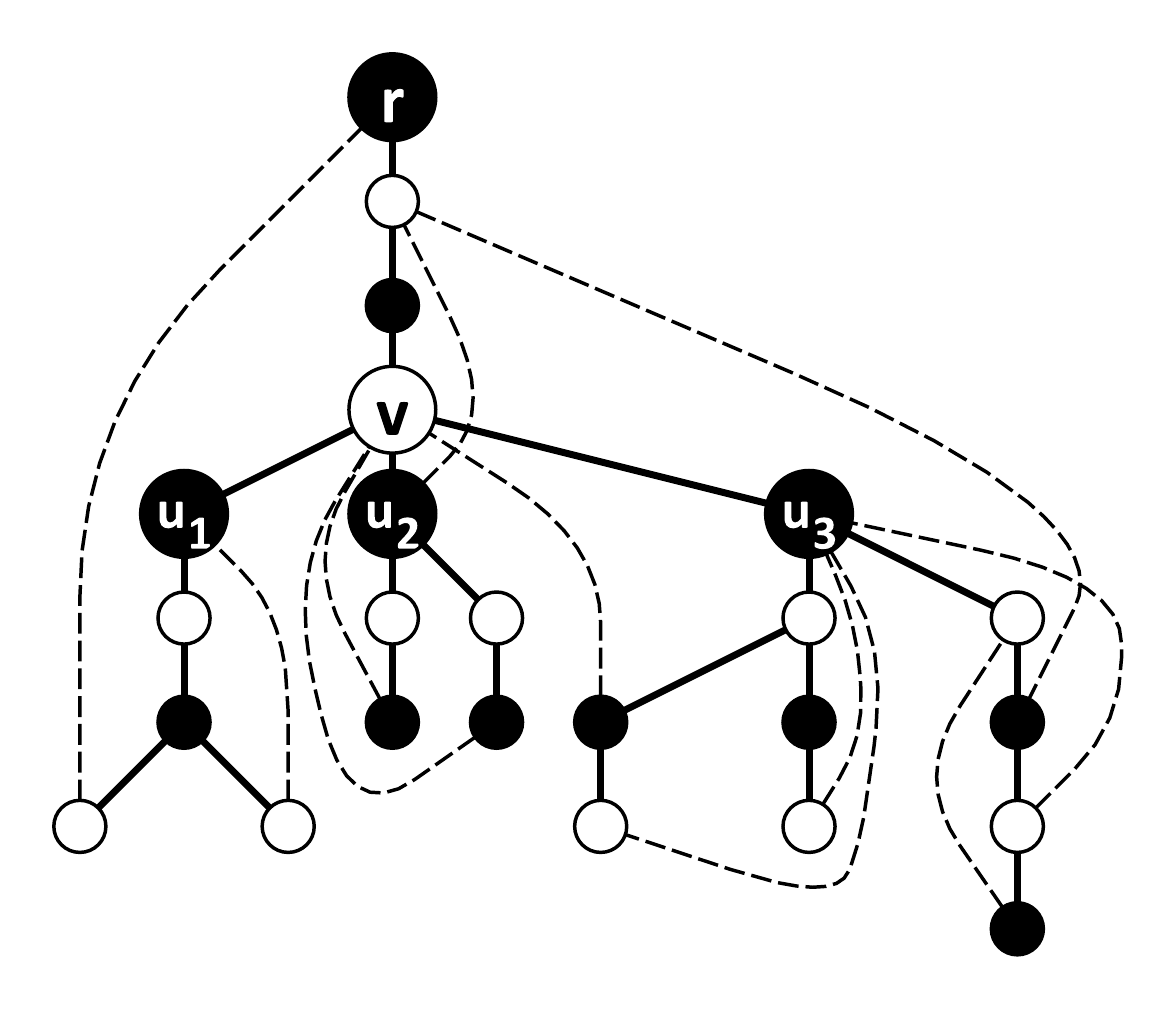}
\caption{DFS tree of a biconnected bipartite graph.}
\label{fig:DFS}
\end{center}
\end{subfigure}
\hspace{0.5cm}
\begin{subfigure}{8cm}
\begin{center}
\includegraphics[scale=0.35]{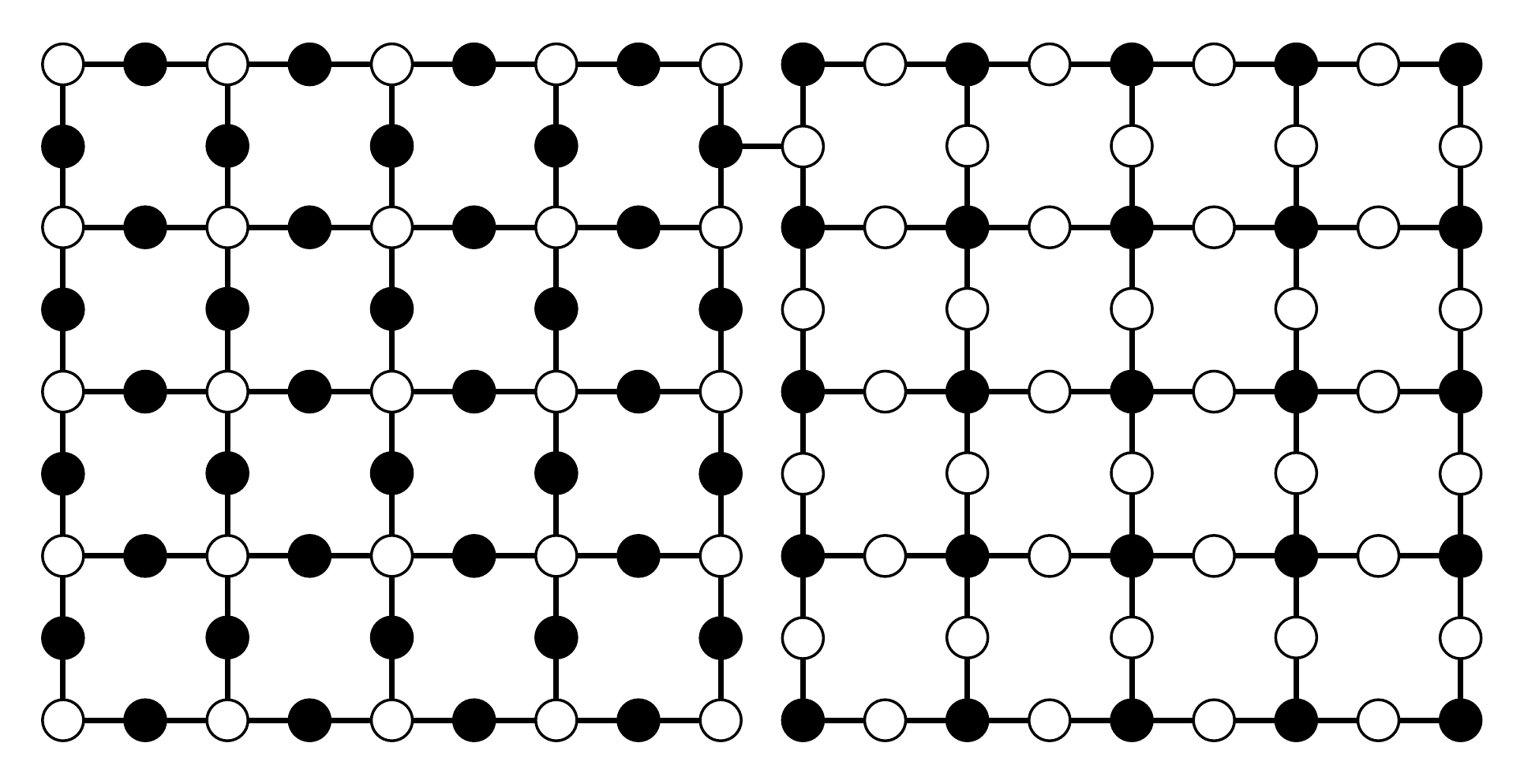}
\caption{Grid-like balanced bipartite graph with large treewidth and small TAR reconfiguration threshold.}
\label{fig:treewidth}
\end{center}
\end{subfigure}
\caption{(\ref{fig:DFS}) Depth-first search tree of a bipartite biconnected graph. Tree-edges are drawn solid, while the remaining edges of~$G$ are drawn with dotted lines. The three children~$u_1, u_2, u_3$ of~$v$ induce subtrees of types A, B, and~C, respectively. (\ref{fig:treewidth}) Template for constructing graphs of large treewidth that can be TAR reconfigured with a buffer of size two. The treewidth is large due to the presence of a large grid minor.}
\label{fig:DFS:treewidth}
\end{figure}

If each vertex in~$T$ has at most~$(L^3 + L^2)$ children, then the bound on the depth given by the previous claim implies that~$T$ (and therefore~$G$) has at most~$\sum _{i=0}^{L^2} (L^3 + L^2)^i = (L^3 + L^2)^{L^2+1} + 1$ vertices and therefore:
\begin{equation} \label{eq:pumpkinsize}
\bistable(G) \leq |V(G)| \leq (L^3 + L^2)^{L^2+1} + 1 \leq f(L) = f(\Pum(G)).
\end{equation}
To complete the proof, it therefore suffices to show that no vertex of~$T$ has more than~$L^3 + L^2$ children. Assume for a contradiction that some vertex~$v$ exists with a larger number of children~$u_1, \ldots, u_m$, for some~$m > L^3 + L^2$. By switching the roles of~$I$ and~$J$ if needed, we may assume that~$v \in I$. For a vertex~$w$, let~$M_w$ denote the set of its proper ancestors in~$T$. We classify the children $u_i$ for~$i \in [m]$ into three types:
\begin{enumerate}[{\normalfont Type A:}]
\item Some vertex of $T_u$ has an edge in $G$ to a vertex in $M_u \cap J$.
\item Vertex~$u$ is not of type A and $|I \cap T_u|\neq |J \cap T_u|$.
\item Vertex~$u$ is not of type A and $|I \cap T_u| = |J\cap T_u|$.
\end{enumerate}
Observe that any child of~$v$ belongs to exactly one type.

\begin{claim}
There are less than $L^3$ type-A children of~$v$.
\end{claim}
\begin{claimproof}
Suppose there are at least~$L^3$ type-A children of~$v$ and assume these are numbered~$u_1, \ldots, u_{L^3}$. Each subtree~$T_{u_i}$ for~$i \in [L^3]$ contains a vertex that has an edge in~$G$ to a proper ancestor of~$u_i$ in~$J$, by our definition of types. Since~$v \in I$, this cannot be~$v$ so that in fact it is also a proper ancestor of~$v$ and belongs to~$M_v$.

Since~$T$ has depth at most~$L^2$ by Claim~\ref{claim:depth}, by the pigeon-hole principle there is a vertex $w\in M_v \cap J$, such that $L$ subtrees among $T_{u_1},\ldots, T_{u_{L^3}}$ contain a vertex that is adjacent to~$w$ in~$G$. From each such subtree~$T_{u_i}$, we obtain a path in~$G$ from~$v$ to~$w$ whose internal vertices belong to~$T_{u_i}$, by going from~$v$ to~$u_i$, then to a neighbor of~$w$ in the subtree~$T_{u_i}$ using the tree edges, and ending with the edge to~$w$. Applying this procedure to each of the~$L$ subtrees that connect to~$w$ yields at least $L$ internally vertex-disjoint paths from $v$ to $w$. Since $G$ is bipartite and $v$ and $w$ belong to different partite sets, each path connecting $v$ and $w$ is of odd length. Hence this collection of~$L$ vertex-disjoint paths between~$v$ and~$w$ forms a pumpkin of size more than~$L$: each of the~$L$ paths has at least one internal vertex, and together with~$v$ and~$w$ this gives size at least~$L+2$. This contradicts our choice of~$L = \Pum(G)$.
\end{claimproof}

\begin{claim}
There are at most $\depth(v)+1$ type-B children of~$v$.
\end{claim}
\begin{claimproof}
Since $G$ is bistable, it has a perfect matching~$M$ by Lemma~\ref{lem:bistable-property}. By the properties of a DFS tree, for each vertex in~$T$ its neighbors in~$G$ are among its ancestors and descendants in~$T$. The number of $I$ and $J$-nodes in a subtree~$T_{u_i}$ rooted at a type-B child~$u_i$ are not equal. Since each vertex in~$T_{u_i}$ is assigned a unique neighbor in the other partite set by the perfect matching~$M$, it follows that the matching partner of some vertex in~$T_{u_i}$ does not belong to~$T_{u_i}$, and must therefore be a proper ancestor of~$u_i$ by the properties of DFS trees. Since there are at most~$\depth(v) + 1$ ancestors of~$v$ to use as matching partners, and each type-B child uses a different ancestor as a matching partner for one of its vertices, the number of type-B subtrees is at most~$\depth(v) + 1$.
\end{claimproof}

Since the depth of~$T$ is at most~$L^2$ by Claim~\ref{claim:depth}, any vertex~$v$ that is not a leaf has depth at most~$L^2 - 1$. Hence the number of type-B children of~$v$ is at most~$L^2$.

\begin{claim}
No child of~$v$ is of type~$C$.
\end{claim}
\begin{claimproof}
Suppose there exists a type-C child~$u_i$ of~$v$. Subtree~$T_{u_i}$ does not contain a vertex adjacent to~$M_{u_i} \cap J$, else~$u_i$ would have been type~$A$. Any $G$-neighbor of a vertex in~$T_{u_i}$ that is not contained in~$T_{u_i}$ is a proper ancestor of~$u_i$, by the properties of DFS trees. Hence the set $J' = (J \setminus T_{u_i}) \cup (I \cap T_{u_i})$ forms an independent set in~$G$. By the definition of type-C vertices, $|J \cap T_{u_i}|=|I \cap T_{u_i}|$, so that~$|J'| = |J|$. This shows that~$J'$ is a maximum independent set distinct from~$I$ and~$J$, contradicting the assumption that~$G$ is bistable.
\end{claimproof}

The preceding claims show that no vertex of~$T$ has more than~$L^2 + L^3$ children, which completes the proof of Theorem~\ref{thm:bistable-pumpkin} using (\ref{eq:pumpkinsize}).
\end{proof}

The following theorem is our main result on the MTJ reconfiguration threshold. It bounds the MTJ reconfiguration threshold of a hereditary graph class~$\Pi$ in terms of the maximum size of pumpkin subgraph of a graph in~$\Pi_\bip$. Recall that~$\Pi_\bip$ contains the bipartite graphs in~$\Pi$.

\begin{theorem}\label{th:pumpkin}
For any hereditary graph class $\Pi$, the following holds:
\begin{equation}
g_1(\Pum(\Pi_\bip))\leq \MTJ(\Pi)\leq g_2(\Pum(\Pi_\bip)),
\end{equation}
where $g_1$, $g_2:\N\to\N$ are positive non-decreasing functions defined as
$g_1(k)=k/2$ and~$g_2(k)=(k^3+k^2)^{k^2+1}+1$.
Moreover, for every graph~$G$ we have~$\MTJ(G)\leq g_2(\Pum(G))$.
\end{theorem}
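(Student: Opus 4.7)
The plan is to chain Theorem~\ref{thm:bistable} (which bounds $\MTJ(G)$ by $\bistable(G)$) with Theorem~\ref{thm:bistable-pumpkin} (which bounds the bistable rank in terms of the pumpkin size) to obtain the upper bounds, and to derive the class-level lower bound by observing that the induced subgraph on the vertex set of a pumpkin retains enough of the pumpkin's bistable structure to force a large jump.

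\emph{Upper bound.} I would first establish the single-graph statement $\MTJ(G)\leq g_2(\Pum(G))$ for an arbitrary graph $G$, since the class-level upper bound then follows almost immediately via Proposition~\ref{prop:bip}. Theorem~\ref{thm:bistable} gives $\MTJ(G)\leq \bistable(G)$, so it suffices to bound $\bistable(G)$ by $g_2(\Pum(G))$. If $G$ has no edges, then $\bistable(G)=1=g_2(0)\leq g_2(\Pum(G))$ and we are done. Otherwise let $G'$ be an induced bistable subgraph of $G$ with $\bistable(G')=\bistable(G)$. Since $G'$ is bistable, Theorem~\ref{thm:bistable-pumpkin} yields $\bistable(G')\leq g_2(\Pum(G'))$, and because any pumpkin (not necessarily induced) subgraph of $G'$ is also a pumpkin subgraph of $G$, we have $\Pum(G')\leq \Pum(G)$. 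Using that $g_2$ is non-decreasing closes the chain. For the class bound, Proposition~\ref{prop:bip} gives $\MTJ(\Pi)=\MTJ(\Pi_\bip)=\sup_{G\in\Pi_\bip}\MTJ(G)$, and applying the single-graph bound together with monotonicity of $g_2$ yields $\MTJ(\Pi)\leq g_2(\Pum(\Pi_\bip))$.

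\emph{Lower bound.} For the lower bound I would fix an arbitrary bipartite graph $H\in\Pi_\bip$ containing a pumpkin subgraph $P$ of size $k$, and consider $H':=H[V(P)]$. By heredity $H'\in \Pi$, and $H'$ is bipartite. Since $H$ is bipartite and $P$ is a connected subgraph of $H$, the unique bipartition of $P$ is obtained by restricting the bipartition of $H$ to $V(P)$, so the two partite sets $I,J$ of $P$ (each of size $k/2$) are still independent in $H'$. Now every independent set in $H'$ is independent in $P$ because $E(P)\subseteq E(H')$, and since $P$ is bistable by Observation~\ref{obs:pumpkin:bistable}, its only size-$k/2$ independent sets are $I$ and $J$; hence the same holds in $H'$. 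Therefore any MTJ reconfiguration from $I$ to $J$ inside $H'$ must consist of a single $(k/2)$-jump, so $\MTJ(\Pi)\geq \MTJ(H')\geq k/2$. Taking the supremum over $k$ ranging up to $\Pum(\Pi_\bip)$ yields $\MTJ(\Pi)\geq g_1(\Pum(\Pi_\bip))$.

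\emph{Main obstacle.} The substantive work is already packaged in Theorems~\ref{thm:bistable} and~\ref{thm:bistable-pumpkin}, so what remains is essentially bookkeeping. The one point I would verify with care is that a pumpkin subgraph $P$ of a bipartite graph $H$ is not required to be induced, so $H[V(P)]$ may carry additional edges among $V(P)$; however, since extra edges can only delete independent sets rather than create new size-$k/2$ ones, the bistability of $P$ transfers to $H[V(P)]$ precisely in the sense needed to conclude the lower-bound step, and the argument goes through without further complication.
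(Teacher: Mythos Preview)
Your proposal is correct and follows essentially the same approach as the paper: chain Theorem~\ref{thm:bistable} with Theorem~\ref{thm:bistable-pumpkin} for the upper bound (using Proposition~\ref{prop:bip} to pass to $\Pi_\bip$), and for the lower bound pass to the induced subgraph on the vertex set of a pumpkin and use that adding edges to a bistable graph can only destroy independent sets. Your treatment is in fact slightly more explicit than the paper's in two places---you spell out the single-graph ``Moreover'' claim for arbitrary (not just bipartite) $G$ by pulling out the bistable induced subgraph before invoking Theorem~\ref{thm:bistable-pumpkin}, and you articulate precisely why the extra edges in $H[V(P)]$ do not interfere---but the underlying argument is the same.
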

\begin{proof}
We combine the bounds on the MTJ reconfiguration threshold of Theorem~\ref{thm:bistable}, with the relation between pumpkins and bistable graphs of Theorem~\ref{thm:bistable-pumpkin}.

\subparagraph{Lower bound on MTJ.} Consider a bipartite graph~$G$ in~$\Pi_\bip$. Then~$G$ contains a pumpkin subgraph on~$\Pum(G)$ vertices~$S \subseteq V(G)$. Then~$G[S] = (I \cup J, E)$ is a bipartite supergraph of a pumpkin, which is contained in~$\Pi_\bip$. Since any pumpkin is bistable by Observation~\ref{obs:pumpkin:bistable}, it follows that reconfiguring~$I$ to~$J$ in the pumpkin subgraph of~$G[S]$ requires a jump of size~$|I| = |J| = \Pum(G) / 2$. It is clearly no easier to reconfigure~$I$ to~$J$ in the supergraph~$G[S] \in \Pi_\bip$. Hence~$\MTJ(\Pi) \geq \Pum(G) / 2$ for all graphs~$G$ in~$\Pi_\bip$, giving the lower bound~$\MTJ(\Pi) \geq \Pum(\Pi_\bip) / 2$.

\subparagraph{Upper bound on MTJ.} By Proposition~\ref{prop:bip} it suffices to prove that~$\MTJ(\Pi_\bip) \leq g_2(\Pum(\Pi_\bip))$. Consider an arbitrary graph~$G \in \Pi_\bip$. By Theorem~\ref{thm:bistable}, we have~$\MTJ(G) \leq \bistable(G) \leq g_2(\Pum(G)) \leq g_2(\Pum(\Pi_\bip))$, where the second inequality follows from Theorem~\ref{thm:bistable-pumpkin}. Hence~$\MTJ(\Pi_\bip) \leq g_2(\Pum(\Pi_\bip))$, concluding the proof.
\end{proof}

While the upper bound of Theorem~\ref{th:pumpkin} has room for improvement, the following lemma shows that the exponential dependency on the pumpkin size in the upper bound is unavoidable.

\begin{proposition} \label{prop:super:pumpkin}
Let $\Pi_{\Pum}(k) := \{ G : \Pum(G) \leq k\}$ be the class of all graphs $G$ whose largest
pumpkin subgraph has size at most~$k$. Then $\MTJ(\Pi_{\Pum}(k)) = 2^{\Omega(k)}$.
\end{proposition}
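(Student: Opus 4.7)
The strategy is to invoke Theorem~\ref{thm:bistable} and reduce the claim to constructing bistable graphs. If $H$ is bistable of rank $r$, then reconfiguring one partite set to the other in the MTJ model requires a single jump of exactly $r$ tokens, since these are the only two maximum independent sets of that size. Hence $\MTJ(H) \geq r$, and in order to prove $\MTJ(\Pi_{\Pum}(k)) = 2^{\Omega(k)}$ it suffices to exhibit, for every $k$, a bistable graph $H_k$ with $\Pum(H_k) \leq k$ and rank $2^{\Omega(k)}$: then $H_k \in \Pi_{\Pum}(k)$, and $\MTJ(H_k)$ is at least its rank.

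The plan for constructing the family $\{H_k\}$ is recursive. Start from a small bistable base such as $H_0 = C_4$, which has rank $2$ and pumpkin size $4$. From $H_d$ one would form $H_{d+1}$ by taking two copies of $H_d$ and joining them through a small bipartite \emph{bridging gadget} so that the rank roughly doubles while the pumpkin size grows by only a constant $c$. Iterating $\Theta(k)$ times then yields a bistable graph of rank $2^{\Omega(k)}$ and pumpkin size $O(k)$. The invariants to maintain inductively across the recursion are: (i) balanced bipartite; (ii) biconnected; (iii) exactly two maximum independent sets, coinciding with the partite sets; and (iv) $\Pum(H_{d+1}) \leq \Pum(H_d) + c$.

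I expect the hard part to be invariant (iv). Naive combining operations such as a perfect matching between corresponding vertices of the two copies typically create long cycles traversing both copies, which inflates the pumpkin size to $\Theta(|V(H_{d+1})|)$; on the other hand, too-weak bridges such as single cut-edges destroy biconnectivity or break bistability by introducing new maximum independent sets. A suitable gadget must simultaneously guarantee biconnectivity, keep the partite sets as the unique maximum independent sets, and admit only a bounded number of internally vertex-disjoint odd paths between any pair of vertices, so that no large pumpkin can be assembled by combining paths through the gadget with structure inside the two copies. Designing such a gadget and verifying the four invariants inductively is the principal technical work; once done, choosing $d = \Theta(k)$ gives the required graph with $\Pum(H_k) \leq k$ and $\MTJ(H_k) \geq 2^{\Omega(k)}$, completing the proof.
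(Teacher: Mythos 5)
Your high-level strategy is exactly the one the paper follows: a recursive doubling construction (the paper's ``super-pumpkin'' $G_k$, built by gluing two copies of $G_{k-1}$ into a small $(4,2)$-pumpkin gadget) whose unique pair of half-size independent sets forces a single jump of $|G_k|/2 = 2^{\Omega(k)}$ tokens, while the pumpkin size stays $O(k)$. However, as written your proposal has a genuine gap: the bridging gadget and the inductive verification of your invariants are precisely the entire content of the proof, and you explicitly defer them as ``the principal technical work.'' Identifying that a naive matching or a cut-edge fails is not a substitute for exhibiting a gadget that works; without it, nothing is proved.

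Two concrete points about why closing the gap needs more than invariants (i)--(iv). First, your invariant (iii) is too weak to push through the induction: when you glue two copies through a gadget, you must rule out new maximum independent sets that mix vertices of the gadget with ``wrong'' halves of the copies, and for this the paper strengthens the induction hypothesis to record \emph{which} terminal vertex lies in which of the two extremal independent sets (one terminal in each), plus the statement that no independent set exceeds half the vertices; the case analysis over the six gadget vertices only closes with this extra information. Second, your invariant (iv), an additive recursion $\Pum(H_{d+1}) \leq \Pum(H_d) + c$ on the pumpkin size itself, is not what one can easily maintain, because pumpkins in $H_{d+1}$ may thread long odd paths through both copies and the gadget. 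The paper instead controls auxiliary quantities: the maximum degree (which stays $6$), the longest terminal-to-terminal path $L_k = 4k-2$, and hence the longest simple cycle $C_k \leq 8k+2$, and then uses the generic bound $\Pum(G_k) \leq d_{\max}\cdot C_k/2 \leq 24k+6$. Some such indirect bookkeeping (bounded degree plus controlled cycle/path lengths through the attachment points) appears unavoidable, and your proposal neither supplies it nor the gadget it would apply to.
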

\begin{proof}
For each~$k \geq 1$ we will construct a graph $G_k$ belonging to the class $\Pi_{\Pum}(24k+6)$
with $\MTJ(G_k) = 2^{\Omega(k)}$. We will call $G_k$ a \emph{super-pumpkin}. It is defined
recursively, as explained next.

Define a (4,2)-pumpkin, denoted $P_{4,2}$, to be a pumpkin whose two terminal
vertices are connected by four paths with two interior vertices each.
A super-pumpkin is now defined as follows. Like a regular
pumpkin, it has two designated terminal vertices.
The super-pumpkin $G_1$ consists of just a single edge, whose endpoints are its terminal vertices.
The super-pumpkin $G_k$ is obtained by gluing two copies of a
super-pumpkin~$G_{k-1}$---we will denote these copies by
$G_{k-1}^1$ and $G_{k-1}^2$---into a (4,2)-pumpkin~$P_{4,2}$.
This is done by identifying the terminal vertices of $G_{k-1}^1$ and $G_{k-1}^2$
with specific vertices of the (4,2)-pumpkin, as indicated in
Fig.~\ref{fi:super-pumpkin}.
\begin{figure}
\begin{center}
\includegraphics{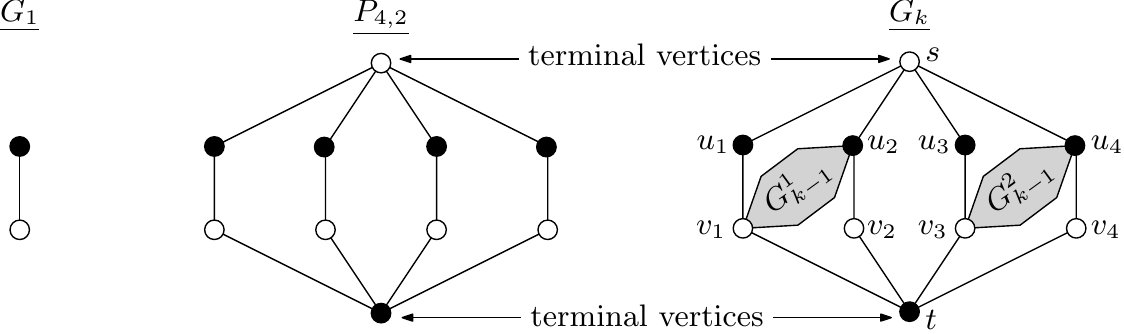}
\end{center}
\caption{Construction of a super-pumpkin.}
\label{fi:super-pumpkin}
\end{figure}

Note that $|G_k|$, the number of vertices of~$G_k$, satisfies $|G_k| = 2 |G_{k-1}| + 6$
with $|G_1|=2$. Hence, $|G_k|=2^k + 6\sum_{i=0}^{k-2} 2^i = 2^k + 6(2^{k-1}-1)$.
\begin{claim} \label{claim:super-size}
$G_k$ has exactly two independent sets of size~$|G_k|/2$, and these independent sets are disjoint.
\end{claim}
\begin{claimproof}
The proof is by induction on~$k$. It will be convenient to prove the following
stronger claim on $\I(G_k)$, the set of all independent sets of~$G_k$.
\begin{quotation}
\noindent $\I(G_k)$ contains no independent set of size more than~$|G_k|/2$ and exactly two independent sets
of size~$|G_k|/2$.  These independent sets are disjoint, and one of them contains one
terminal vertex of~$G_k$ while the other contains the other terminal vertex.
\end{quotation}
This claim trivially holds for $\I(G_1)$, so now consider $\I(G_k)$ for~$k>1$.
Let $s,t$ be the two terminal vertices of~$G_k$,
and label the other vertices of the pumpkin~$P_{4,2}$ as $u_1,\ldots,u_4$
and $v_1,\ldots,v_4$---see Fig.~\ref{fi:super-pumpkin}.
We define $W := \{s,t,u_1,u_3,v_2,v_4\}$ to be the set
of vertices in $G_k$ that do not occur in $G_{k-1}^1$ or $G_{k-1}^2$.
We distinguish two types of independent sets in~$\I(G_k)$.
\medskip

\noindent \textbf{Type~1.}
\emph{Independent sets $I$ such that both $G_{k-1}^1$ and $G_{k-1}^2$ have $|G_{k-1}|/2$ vertices in~$I$.} 
      Note that by the induction hypothesis we have $|\{u_2,v_1\}\cap I|= |\{u_4,v_3\}\cap I| =1$
      for any Type~1 independent set~$I$. Moreover, the total number of vertices from $I$
      inside $G_{k-1}^1$ and $G_{k-1}^2$ is $2(|G_{k-1}|/2)=|G_k|/2-3$.
      We will argue that $G_k$ has two Type~1 independent sets with $|G_k|/2$
      vertices and with the required properties, and that all other Type~1 independent sets
      have less than $|G_k|/2$ vertices. To this end
      we distinguish three subtypes of Type~1.
      \begin{itemize}
      \item \emph{Type~1(i). Independent sets $I$ with $u_2\in I$ and $u_4\in I$.}
            By the induction hypothesis such independent sets $I$ exist, and the choice of vertices of 
            $I$ inside $G^1_{k-1}$ and $G^2_{k-1}$ is fixed.
            Moreover, there is only one way to obtain an independent set $I^*$ with $|G_k|/2$
            vertices, namely by adding $\{u_1,u_3,t\}$ from~$W$---all other selections from~$W$
            give smaller independent sets.
      \item \emph{Type~1(ii). Independent sets $I$ with $v_1\in I$ and $v_3\in I$.}
            Again, the choice of vertices
            for~$I$ inside $G_{k-1}^1$ and $G_{k-1}^2$ is fixed, and there is
            only one way to obtain an independent set of $|G_k|/2$ vertices, this time by adding $\{s,v_2,v_4\}$.
            This independent set $I^{**}$ is disjoint from $I^*$---this follows from the induction hypothesis and the fact $\{u_1,u_3,t\}\cap\{s,v_2,v_4\}=\emptyset$---and it contains~$s$
            while $I^*$ contains~$t$.
      \item \emph{Type~1(iii). Independent sets $I$ with $u_2\in I$ and $v_3\in I$, or  $v_1\in I$ and $u_4\in I$.}
            Now at most two of the vertices from~$W$ can be in~$I$, and so $|I|<|G_k|/2$.
      \end{itemize}
\textbf{Type 2.}
 \emph{Independent sets $I$ such that at least one of $G_{k-1}^1$ and $G_{k-1}^2$ has less than
      $|G_{k-1}|/2$ vertices in~$I$.} 
      We will argue that all such independent sets have less than $|G_k|/2$ vertices.

      Assume without loss of generality that $G^1_{k-1}$
      has less than $|G_{k-1}|/2$ vertices in~$I$. If $G^2_{k-1}$ also has less than
      $|G_{k-1}|/2$ vertices in~$I$, then
      the total number of vertices from $I$ in $G_{k-1}^1$ and $G_{k-1}^2$
      is at most $2(|G_{k-1}|/2-1) = |G_{k}|/2 -5$, and since at most four vertices can be
      selected from~$W$ we have $|I|<|G_{k}|/2$.
      If $G^2_{k-1}$ has $|G_{k-1}|/2$ vertices in~$I$,
      then $|\{u_4,v_3\}\cap I| =1$. Assume without loss of generality that $u_4\in I$.
      Then $s$ and $v_4$ are not in~$I$. Since $v_2$ and $t$ cannot be both in $I$,
      we conclude that we can select at most three vertices from~$W$ into~$I$.
      This again implies that $|I|<|G_k|/2$.\\

Note that each $I\in\I(G_k)$ is of Type~1 or Type~2, since $G_{k-1}^1$ and $G_{k-1}^2$
cannot have more than $|G_{k-1}|/2$ vertices in~$I$ by the induction hypothesis.
This finishes the proof of the claim.
\end{claimproof}
Claim~\ref{claim:super-size} implies that $\MTJ(G_k) = |G_k|/2 = 2^{\Omega(k)}$: since there are only two independent
sets of size~$|G_k|/2$, say $I$ and $J$, and these are disjoint, the only way to go from
$I$ to $J$ is to remove all tokens from $I$ and place them  onto~$J$. Next we bound
the size of the largest pumpkin in $G_k$.
\begin{claim} \label{claim:super-size2}
$\Pum(G_k) \leq 24k+6.$
\end{claim}
\begin{claimproof}
Define $d_{\max}$ to be the maximum degree in $G_k$ and $C_k$ to be the
maximum length of any simple cycle in~$G_k$.
Then $\Pum(G_k) \leq d_{\max} \cdot C_k/2$.

The following statement is easy to prove by induction: the degree of the terminal vertices
in $G_k$ is four, and the maximum degree of any other vertex in $G_k$ is six.
Hence, $d_{\max}=6$ and so $\Pum(G_k) \leq 3 C_k$.

Next we argue that $C_k\leq 8k+2$. To this end, define $L_k$ to be the length
(measured in number of vertices) of  a longest simple path in $G_k$ that ends at
the two terminal vertices of~$G_k$. Then $L_k = L_{k-1}+4$ with $L_1=2$, and
thus $L_k = 4k-2$. We now prove that $C_k \leq 8k+2$ by induction on~$k$.
We have $C_1=0$, so the statement is true for $k=1$.
Now suppose~$k>1$. Let $\mathcal{C}$ be a simple cycle in $G_k$.
If $\mathcal{C}$ stays within one of the copies of $G_{k-1}$
we have $|\mathcal{C}| \leq C_{k-1}$ by induction. Otherwise the
maximum possible length for $\mathcal{C}$ is obtained by taking a longest
path from $u_2$ to $v_1$ in the first copy of $G_{k-1}$,
a longest path from $u_4$ to $v_3$ in the second copy, and connecting
them into a cycle using all six vertices in~$W$, where $W$ is defined as before. Hence,
\[
C_k \leq \max(C_{k-1}, 2 L_{k-1} +6) = \max(C_{k-1},8k+2).
\]
It follows that $C_k \leq 8k+2$. Hence, $\Pum(G_k) \leq 3C_k \leq 24k+6$.
\end{claimproof}
This concludes the proof of Proposition~\ref{prop:super:pumpkin}.
\end{proof}

\section{Threshold for Token Addition Removal Reconfiguration} \label{sec:tar}
In this section we study the model of token additional removal. First observe that when~$G$ is a forest, we have $\MTJ(G)\leq 1$ and therefore $\TAR(G)\leq 1$ as well.
Also, from Theorem~\ref{thm:max-match} we get $\TAR(G)\leq \max(\vc(G),1)$.
But the inequality $\TAR(G)\leq \MTJ(G)$ tells us nothing about the behavior of the TAR reconfiguration threshold when the MTJ reconfiguration threshold is large.
The next simple proposition immediately points towards this direction.
Indeed observe that a large pumpkin (with large MTJ reconfiguration threshold) can have a small feedback vertex set; this happens for even cycles, for example.
\begin{proposition} \label{prop:fvs}
Let $G=(V,E)$ be a graph. Then $\TAR(G)\leq \fvs(G)+1$.
\end{proposition}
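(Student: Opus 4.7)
The plan is to use Proposition~\ref{prop:balancedbip} to reduce to the case in which $G$ is bipartite with partite sets $I, J$ of equal size, after which the restriction $F' := F \cap V(G[I \Delta J])$ of a minimum feedback vertex set $F$ is still a FVS of the reduced graph of size at most $k := \fvs(G)$, since $G[I \Delta J] - F'$ is an induced subgraph of the forest $G - F$. Writing $a := |I \cap F'|$ and $b := |J \cap F'|$, we have $a + b \leq k$, and we want a TAR reconfiguration of $I$ to $J$ using buffer at most $k+1$.

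I will carry out the reconfiguration in three phases: (1)~sequentially remove the $a$ tokens lying on $I \cap F'$, raising the buffer from $0$ to $a$; (2)~starting from $I \setminus F'$, reconfigure inside the forest $G - F'$ to reach $J \setminus F'$; (3)~sequentially add the $b$ tokens of $J \cap F'$, bringing the buffer back from $b$ down to $0$. Phases~(1) and~(3) are immediate: in~(3), each $v \in J \cap F'$ is safely addable because its neighbors lie in $I$, its $I \cap F'$ neighbors were cleared in Phase~(1), and its $I \setminus F'$ neighbors are absent from the current state $J \setminus F'$.

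Phase~(2) is the technical heart. Since the current state lives entirely in $V(G) \setminus F'$, independence in $G$ is equivalent to independence in the forest $G - F'$, and Theorem~\ref{thm:tree} together with the paper's general inequality $\TAR \leq \MTJ$ gives forest-TAR reconfiguration with buffer at most $1$ between \emph{equal-sized} independent sets. The complication is that $|I \setminus F'| = |I| - a$ and $|J \setminus F'| = |I| - b$ need not agree; I plan to handle the size discrepancy with a case split. If $a \leq b$, first drop $b - a$ arbitrary tokens from $I \setminus F'$ (buffer climbs to $b$) and then apply the forest bound to reach $J \setminus F'$, with the buffer peaking at $b + 1 \leq k+1$. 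If $a > b$, first apply the forest bound to reconfigure $I \setminus F'$ to an arbitrary size-$(|I| - a)$ subset $U$ of $J \setminus F'$ (buffer peaks at $a + 1 \leq k+1$), and then extend $U$ to $J \setminus F'$ by adding the missing $a - b$ vertices one at a time; each such addition is safe because the added vertex lies in $J$ and its forest-neighbors lie in $I \setminus F'$, which is disjoint from the current state $U \subseteq J \setminus F'$.

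In either sub-case the buffer stays bounded by $\max(a, b) + 1 \leq k + 1$, so combining the three phases yields $\TAR(G) \leq k + 1 = \fvs(G) + 1$. The only step that demands any care is bridging the size gap $|a - b|$ inside Phase~(2), since the forest guarantee applies only to equal-sized sets; once that is handled by the case split above, the rest is routine bookkeeping.
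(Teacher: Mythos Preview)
Your proof is correct and follows essentially the same three-phase strategy as the paper: clear the feedback-vertex-set vertices on the $I$-side, TAR-reconfigure inside the resulting forest via Theorem~\ref{thm:tree} and the inequality $\TAR\leq\MTJ$, then place the feedback-vertex-set vertices on the $J$-side. The only difference is that the paper balances the sizes by first padding $I\cap S$ and $J\cap S$ to sets $S_I,S_J$ of size exactly $k$ (handling the trivial case $|I|\leq k$ separately), so that $|I\setminus S_I|=|J\setminus S_J|$ automatically, whereas you leave $a,b$ as they are and absorb the discrepancy $|a-b|$ with your case split; both routes give buffer at most $k+1$.
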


\begin{proof}
Let~$G = (V,E)$ be a graph with a minimum feedback vertex set~$S \subseteq V$ of size~$k$, and let~$I, J \subseteq V$ be independent sets of equal size. By Proposition~\ref{prop:balancedbip} we can assume that~$V = I \cup J$ and~$I \cap J = \emptyset$. If~$|I| = |J| \leq k$, then it is trivial to reconfigure~$I$ to~$J$ with a buffer of size at most~$k$, by first moving all tokens from~$I$ into the buffer, and then onto~$J$. In the remainder we assume~$|I| = |J| > k$. Let~$S_I$ be a superset of~$I \cap S$ of size~$k$, and let~$S_J$ be a superset of~$J \cap S$ of size~$k$. Then the graph~$G' := G - (S_I \cup S_J)$ is a subgraph of~$G - S$ and is therefore acyclic since~$S$ is a feedback vertex set. By Theorem~\ref{thm:tree} it follows that~$I' := I \setminus S_I$ can be MTJ reconfigured to~$J' := J \setminus S_J$ in~$G'$ by jumps of size~$1$, which easily implies that~$I'$ can be TAR reconfigured to~$J'$ in~$G'$ using a buffer of size at most~$1$; let~$\mathcal{S}$ be a corresponding reconfiguration sequence. To reconfigure~$I$ to~$J$ in~$G$, start by removing the tokens from the~$k$ vertices in~$S_I$ and place them in the buffer. Then apply the reconfiguration sequence~$\mathcal{S}$ to reconfigure~$I'$ to~$J'$, using at most~$1$ extra buffer token. Finish by moving the~$k$ buffer tokens onto~$S_J$ to arrive at the independent set~$J' \cup S_J = J$.
\end{proof}

One can see that the above bound is tight, by considering the TAR reconfiguration threshold of a complete balanced bipartite graph. Indeed for $K_{n,n}$, the minimum size of a feedback vertex set is $n-1$, and one can see that in order to include any one of the vertices of the target independent set the reconfiguration must pass through the empty set. This shows that the TAR reconfiguration threshold is also~$n$.

\subsection{TAR Reconfiguration Threshold in Terms of Pathwidth}
As the main result of this section, we will show that the TAR reconfiguration threshold of a graph is upper-bounded in terms of its \emph{pathwidth}. Before proving that statement, we present a structural lemma about path decompositions that will be useful in the proof.

\begin{lemma} \label{lem:pathwidthprefix}
Let~$G = (I \cup J, E)$ be a bipartite graph with a nice path decomposition~$\mathcal{P} = (X_1, \ldots, X_r)$ of width~$k$. Let~$S \subseteq J$ such that~$|N(S)| \leq |S|$ while no non-empty subset of~$S$ has this property. If we order the vertices in~$S$ as~$i_1, \ldots, i_t$ such that~$r_\mathcal{P}(i_1) < r_\mathcal{P}(i_2) < \ldots < r_\mathcal{P}(i_t)$, then~$|N(\{i_1, \ldots, i_{t'}\})| < t' + k$ for all~$1 \leq t' \leq t$.
\end{lemma}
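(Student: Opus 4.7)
I would begin by observing that the induced subgraph $G[N[S]]$ is bistable: the proof of Lemma~\ref{lemma:bistable}, with the roles of $I$ and $J$ interchanged, applies to our setup (note that isolated vertices in $S$ are ruled out by the minimality of $S$). By Lemma~\ref{lem:bistable-property}(1), this bistable graph carries a perfect matching $\mu$ between $S$ and $N(S)$; in particular $|N(S)|=|S|=t$. The plan is to construct an injection
\[
\phi\colon N(S_{t'})\hookrightarrow S_{t'}\sqcup\bigl(X_{r_{t'}}\setminus\{i_{t'}\}\bigr)
\]
whose codomain has size at most $t'+k$, and then to show that at least one element of the codomain is missed by $\phi$, yielding the strict bound $|N(S_{t'})|\le t'+k-1<t'+k$.

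The natural recipe is: for $u\in N(S_{t'})$, set $\phi(u)=\mu(u)$ whenever $\mu(u)\in S_{t'}$ or $u\notin X_{r_{t'}}$, and $\phi(u)=u$ otherwise. The only delicate well-definedness case is when $u\notin X_{r_{t'}}$ and $\mu(u)=i_j$ with $j>t'$: then $r_\mathcal{P}(u)<r_{t'}<r_\mathcal{P}(i_j)$, and $u,i_j$ share a bag of index at most $r_\mathcal{P}(u)<r_{t'}$, so the interval property of $\mathcal{P}$ forces $l_\mathcal{P}(i_j)<r_{t'}$ and hence $i_j\in X_{r_{t'}}\setminus\{i_{t'}\}$. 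Injectivity of $\phi$ is immediate: $\mu$ is injective and the two branches land on opposite sides of the bipartition (the $\mu$-branch in $S\subseteq J$, the self-branch in $I$). The codomain is a disjoint union of size at most $t'+k$, because $r_\mathcal{P}(i_j)<r_{t'}$ for every $j<t'$ keeps all such $i_j$ out of $X_{r_{t'}}$ and $|X_{r_{t'}}|\le k+1$.

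It remains to exhibit a codomain element that $\phi$ misses. Let $u^\star:=\mu^{-1}(i_{t'})$. When $u^\star\in X_{r_{t'}}$, the element $u^\star\in I\cap(X_{r_{t'}}\setminus\{i_{t'}\})$ is never attained: $\phi(u^\star)=i_{t'}\ne u^\star$, and no other $v$ can map to $u^\star$ since the $\mu$-branch lands in $S$ and the self-branch requires $v=u^\star$. This already closes the lemma in one case and is the easy half of the argument. The main obstacle I expect to wrestle with is the opposite case $u^\star\notin X_{r_{t'}}$: here $\phi(u^\star)=i_{t'}$, so the slot $i_{t'}$ is already used and a different missing slot must be located. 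The intended route is by contradiction: assuming every codomain element is hit forces, for each $i_j\in S\cap X_{r_{t'}}$ with $j>t'$, that $\mu^{-1}(i_j)$ lies strictly to the left of $X_{r_{t'}}$ in $\mathcal{P}$ and simultaneously in $N(S_{t'})$; combined with the niceness of $\mathcal{P}$ (so that $r_\mathcal{P}$ is injective and $X_{r_{t'}-1}\to X_{r_{t'}}$ is a single add/remove), the bag-size bound $|X_{r_{t'}}|\le k+1$, and the strict Hall condition on proper subsets of $S$ supplied by minimality, this configuration should over-determine the matching edges crossing the separator $X_{r_{t'}}$ and yield a contradiction. Pinning down exactly this contradiction -- essentially a careful counting of matching edges against the bag $X_{r_{t'}}$ -- is where I expect the real work of the proof to sit.
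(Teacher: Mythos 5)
Your construction is sound as far as it goes, and it is a genuinely different (and rather clean) route to the \emph{non-strict} bound: the perfect matching $\mu$ between $S$ and $N(S)$ does exist once $|S|\geq 2$ (this follows directly from Fact~\ref{fact:konig} together with minimality of $S$; you do not need Lemma~\ref{lemma:bistable}, whose hypotheses---balanced, no isolated vertices---are not literally satisfied here, and note that for $S$ a singleton isolated vertex the matching fails but the lemma is vacuously true, so that case should just be split off). The injection $\phi$ is well defined and injective for the reasons you give, so $|N(\{i_1,\ldots,i_{t'}\})|\leq t'+k$, and your argument that the slot $u^\star=\mu^{-1}(i_{t'})$ is missed when $u^\star\in X_{r_\mathcal{P}(i_{t'})}$ is correct.

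The gap is the remaining case $u^\star\notin X_{r_\mathcal{P}(i_{t'})}$, which you explicitly leave as a plan rather than a proof---and this is exactly where the content of the lemma sits. The weak inequality alone is not enough for the intended application: Theorem~\ref{th:pathwidth} needs $|N(\{i_1,\ldots,i_{t'}\})|-t'\leq k-1$ to keep the buffer at size $k$; with $\leq t'+k$ you would only get $\TAR(G)\leq\pw(G)+1$. Moreover, the missing step is not a routine count: assuming $\phi$ is surjective does force structure (every vertex of $X_{r_\mathcal{P}(i_{t'})}\setminus\{i_{t'}\}$ must lie in $S\cup N(\{i_1,\ldots,i_{t'}\})$, and each $i_j\in X_{r_\mathcal{P}(i_{t'})}$ with $j>t'$ has its matching partner to the left and in $N(\{i_1,\ldots,i_{t'}\})$), but deriving a contradiction requires a second, global use of the minimality of $S$, not just bookkeeping against the bag. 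For comparison, the paper's proof partitions $S\cup N(S)$ relative to the bag $X_{r_\mathcal{P}(i_{t'})}$ into vertices ending at or before it, vertices starting strictly after it, and the at most $k$ crossing vertices, and then argues in two cases: if some vertex of $S$ lies entirely to the right, that right part of $S$ is a nonempty proper subset whose neighborhood is no larger than itself (contradicting minimality); otherwise a delicate argument exploiting niceness of the decomposition shows the bag consists only of $S$-vertices and eventually produces an isolated vertex of $S$, again contradicting minimality. Something of comparable substance is needed to finish your case; as written, the proposal proves strictly less than the lemma.
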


Intuitively, the lemma says the following. Suppose a set~$S \subseteq J$ is inclusion-wise minimal with respect to being no smaller than its neighborhood. Then ordering~$S$ according to the right endpoints of the intervals representing~$S$ in the path decomposition, we are guaranteed that every prefix of~$S$ has a fairly small neighborhood compared to its size: the neighborhood size exceeds the size of the prefix by less than the pathwidth. Note that since the lemma deals with bipartite graphs only, no vertex of~$S$ can belong to the neighborhood of any prefix of~$S$. The ordering of the vertices is uniquely defined since the path decomposition is nice. The bound of Lemma~\ref{lem:pathwidthprefix} is best-possible. Consider a complete bipartite graph~$K_{n,n}$, with pathwidth~$n$. In any optimal path decomposition, for~$t' = 1$ the first vertex in the ordering has a neighborhood of size~$n$ and so~$n < t' + n = 1 + n$, but a better bound is not possible.

\begin{proof}[Proof of Lemma \ref{lem:pathwidthprefix}]
First observe that in a graph with a path decomposition of width~$k=0$ there can be no edges. Then the only vertex-minimal set~$S$ satisfying the assumptions is an isolated vertex, for which the claim trivially holds. In the remainder we assume~$k \geq 1$. For $t'=t$ we have~$\{i_1,\ldots,i_{t'}\}=S$, and by assumption $|N(S)|\leq |S|$. So for~$t' = t$ the claim in the lemma holds trivially for any $k\geq 1$. Assume for a contradiction that there is some~$t' < t$ such that:
\begin{equation}\label{eq:cross}
|N(\{i_1,\ldots,i_{t'}\})|-t'\geq k.
\end{equation}
We partition~$T := S \cup N(S)$ into three disjoint subsets to derive some structural properties that will lead to a contradiction.
\begin{enumerate}[{\normalfont (i)}]
\item $T_1:= \{v\in S\cup N(S):r_\mathcal{P}(v)\leq r_\mathcal{P}(i_{t'})\}$, the set of all vertices in $S\cup N(S)$ that are not contained in any of the bags after the bag with index $r_\mathcal{P}(i_{t'})$.
\item $T_2:= \{v\in S\cup N(S):l_\mathcal{P}(v)> r_\mathcal{P}(i_{t'})\}$, the set of all vertices in $S\cup N(S)$ that are not contained in any of the bags before or including the bag with index $r_\mathcal{P}(i_{t'})$.
\item $T_3:=  \{v\in S\cup N(S):l_\mathcal{P}(v)\leq r_\mathcal{P}(i_{t'})<r_\mathcal{P}(v)\}$, the set of all vertices in $S\cup N(S)$ that are contained in some bags before or including the bag with index $r_\mathcal{P}(i_{t'})$ and also in some bag after it.
\end{enumerate}
Observe that~$(T_1 \cap S) \cup (T_2 \cap S) \cup (T_3 \cap S)$ is a partition of~$S$, and that~$T_1 \cap S = \{i_1, \ldots, i_{t'}\}$. 

\begin{claim} \label{claim:sthree}
$|T_3|\leq k$.
\end{claim}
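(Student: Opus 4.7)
The plan is to exploit the interval property (P3) of path decompositions. By the definitions, a vertex $v$ lies in $T_3$ precisely when its ``interval of bags'' $[l_\mathcal{P}(v),r_\mathcal{P}(v)]$ strictly straddles the index $r_\mathcal{P}(i_{t'})$. First I would invoke~(P3) to conclude that every such $v$ must in particular appear in the bag $X_{r_\mathcal{P}(i_{t'})}$ itself, so $T_3 \subseteq X_{r_\mathcal{P}(i_{t'})}$. Since the decomposition has width $k$, the bag contains at most $k+1$ vertices, giving the weak bound $|T_3|\leq k+1$.

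To sharpen this by one, I would exhibit a specific vertex of $X_{r_\mathcal{P}(i_{t'})}$ that is guaranteed \emph{not} to lie in $T_3$. The natural candidate is $i_{t'}$ itself: it belongs to $X_{r_\mathcal{P}(i_{t'})}$ by the definition of $r_\mathcal{P}(i_{t'})$, and it lies in $T_1$ rather than $T_3$ because $r_\mathcal{P}(i_{t'}) \leq r_\mathcal{P}(i_{t'})$ is a non-strict inequality. The three sets $T_1,T_2,T_3$ are pairwise disjoint by their defining inequalities (they split the vertices of $T$ according to whether the interval lies weakly left of, strictly right of, or straddles the index $r_\mathcal{P}(i_{t'})$), so $\{i_{t'}\}$ and $T_3$ are disjoint subsets of $X_{r_\mathcal{P}(i_{t'})}$. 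This yields $|T_3|+1\leq k+1$, i.e.\ $|T_3|\leq k$.

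I do not expect any real obstacle in the claim itself, since it reduces to the observation above together with the width bound. The genuine difficulty will come afterwards, when one must combine the bound $|T_3|\leq k$ with a symmetric control over $T_1$ and $T_2$ and the hypothesis~(\ref{eq:cross}) to produce a proper non-empty subset $S'\subsetneq S$ with $|N(S')|\leq |S'|$, contradicting the inclusion-wise minimality of $S$. That balancing argument, rather than the present claim, is where the core of the lemma's proof should live.
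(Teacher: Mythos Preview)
Your argument is correct and matches the paper's proof essentially verbatim: both use (P3) to get $T_3 \subseteq X_{r_\mathcal{P}(i_{t'})}$, then subtract the single vertex $i_{t'}$ (which lies in that bag but in $T_1$, not $T_3$) from the width bound $|X_{r_\mathcal{P}(i_{t'})}|\leq k+1$ to obtain $|T_3|\leq k$. Your closing remarks about where the real work of the lemma lies are also accurate.
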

\begin{claimproof}
From property (P3) in the definition of path decomposition we know that $T_3\subseteq X_{\ell}$ for $\ell = r_{\mathcal{P}}(i_{t'})$.
Now $|X_{\ell}|\leq k+1$ since the width of~$\mathcal{P}$ is at most~$k$, and we know that $i_{t'}\in X_\ell \setminus T_3$.  Therefore we have $|T_3|\leq |X_\ell\setminus\{i_{t'}\}|\leq k$.
\end{claimproof}

For the remainder of the proof we distinguish two cases.

\subparagraph{Case 1: $T_2 \cap S = \emptyset$.} Then~$(T_1 \cap S) \cup (T_3 \cap S)$ is a partition of~$S$. By Claim~\ref{claim:sthree} we have~$|T_3 \cap S| \leq k$, and therefore
\begin{equation} \label{eq:caseone:ssize}
|S| = |T_1 \cap S| + |T_3 \cap S| \leq |T_1 \cap S| + k = t' + k.
\end{equation}

\begin{claim} \label{claim:prefixneighbors:beforetprime}
In Case 1 we have $N(\{i_{t'+1}, \ldots, i_t\}) \subseteq N(\{i_1, \ldots, i_{t'}\})$.
\end{claim}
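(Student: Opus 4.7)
The plan is to prove the claim via a sandwich argument on cardinalities, concluding $N(S) = N(\{i_1,\ldots,i_{t'}\})$ and hence the desired inclusion. Let $S' := \{i_1,\ldots,i_{t'}\}$ and observe the trivial containment $N(S') \subseteq N(S)$ since $S' \subseteq S$. I want to show these two neighborhoods actually have the same size, forcing them to be equal, which directly gives $N(\{i_{t'+1},\ldots,i_t\}) \subseteq N(S) = N(S')$.

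I would then collect three inequalities and chain them. First, equation~(\ref{eq:cross}) gives $|N(S')| \geq t' + k$. Second, the Case~1 analysis already carried out in equation~(\ref{eq:caseone:ssize}) gives $|S| \leq t' + k$: because $T_2 \cap S = \emptyset$ by the case hypothesis, the set $S$ is partitioned as $(T_1 \cap S) \cup (T_3 \cap S)$, where $T_1 \cap S = \{i_1,\ldots,i_{t'}\}$ has size $t'$ and $|T_3 \cap S| \leq |T_3| \leq k$ by Claim~\ref{claim:sthree}. Third, the hypothesis of Lemma~\ref{lem:pathwidthprefix} gives $|N(S)| \leq |S|$.

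Chaining these together with the trivial $|N(S')| \leq |N(S)|$ yields
\[
t' + k \;\leq\; |N(S')| \;\leq\; |N(S)| \;\leq\; |S| \;\leq\; t' + k,
\]
so all four quantities coincide. In particular $|N(S')| = |N(S)|$, and since $N(S') \subseteq N(S)$ the two neighborhoods are equal. But then $N(\{i_{t'+1},\ldots,i_t\}) \subseteq N(S) = N(S')$, which is exactly the statement of the claim.

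There is essentially no obstacle here; the claim is a short corollary of the bookkeeping already present in the paper. The only subtle point is making sure one quotes the correct inclusion $T_1 \cap S = \{i_1,\ldots,i_{t'}\}$, which follows from the ordering of $S$ by right endpoints in the nice path decomposition and the definition of $T_1$, and noting that $i_{t'} \notin T_3$ (it lies in $T_1$) so that Claim~\ref{claim:sthree} indeed applies to bound $|T_3 \cap S|$ by $k$ rather than $k+1$.
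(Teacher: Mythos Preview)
Your proof is correct and uses exactly the same ingredients as the paper's proof: equation~(\ref{eq:cross}), equation~(\ref{eq:caseone:ssize}), and the hypothesis $|N(S)|\le|S|$. The only difference is cosmetic: the paper argues by contradiction (if some $v$ lay in the difference then $|N(S)|>|N(S')|\ge t'+k\ge|S|$), whereas you phrase the same chain of inequalities as a direct sandwich to conclude $N(S')=N(S)$.
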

\begin{claimproof}
Assume for a contradiction that~$v \in N(\{i_{t'+1}, \ldots, i_t\}) \setminus N(\{i_1, \ldots, i_{t'}\})$.
By~(\ref{eq:cross}) we have~$|N(\{i_1,\ldots,i_{t'}\})|\geq t' + k$, and the existence of~$v$ shows that
$$|N(S)| = |N(\{i_1, \ldots, i_t\})| > |N(\{i_1, \ldots, i_{t'}\})| \geq t' + k \geq |S|,$$ by (\ref{eq:caseone:ssize}). But this contradicts the starting assumption that~$|N(S)| \leq |S|$.
\end{claimproof}

\begin{claim} \label{claim:caseone:three}
In Case 1 we have~$|T_3 \cap S| \geq k$, implying that~$T_3 \subseteq S$ and~$|T_3 \cap S| = k$.
\end{claim}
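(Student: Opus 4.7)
The plan is to combine the lower bound on $|N(\{i_1,\dots,i_{t'}\})|$ given by~(\ref{eq:cross}), the standing hypothesis $|N(S)|\leq|S|$, and the upper bound $|T_3|\leq k$ from Claim~\ref{claim:sthree}, using the Case~1 assumption to pinpoint where the ``tail'' vertices $i_{t'+1},\dots,i_t$ sit in the path decomposition.

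First I would show that the tail of $S$ lies entirely in $T_3$. Since $T_2\cap S=\emptyset$ in Case~1, each $i_j$ with $j>t'$ belongs to $T_1\cap S$ or $T_3\cap S$; but $i_j\in T_1$ would force $r_{\mathcal P}(i_j)\leq r_{\mathcal P}(i_{t'})$, contradicting the ordering of $S$ by right endpoints. Hence $\{i_{t'+1},\dots,i_t\}\subseteq T_3\cap S$, giving the first inequality $|T_3\cap S|\geq t-t'$.

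Next I would convert this into the desired bound $|T_3\cap S|\geq k$ by estimating $t-t'$ from below. Claim~\ref{claim:prefixneighbors:beforetprime} gives $N(\{i_{t'+1},\dots,i_t\})\subseteq N(\{i_1,\dots,i_{t'}\})$, so $N(S)=N(\{i_1,\dots,i_{t'}\})$; combining $|N(\{i_1,\dots,i_{t'}\})|\geq t'+k$ from~(\ref{eq:cross}) with $|N(S)|\leq|S|=t$ yields $t-t'\geq k$. Together with the previous step this gives $|T_3\cap S|\geq k$. Finally, Claim~\ref{claim:sthree} says $|T_3|\leq k$, so $|T_3\cap S|=k$ and, since $T_3\cap S$ then exhausts $T_3$, we conclude $T_3\subseteq S$.

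The argument is mostly bookkeeping once the earlier ingredients are assembled; the only subtle point is invoking Case~1 to force the tail of $S$ into $T_3$, after which the arithmetic involving~(\ref{eq:cross}), the hypothesis $|N(S)|\leq|S|$, and Claim~\ref{claim:sthree} closes the argument without difficulty.
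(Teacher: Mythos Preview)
Your proof is correct and follows essentially the same arithmetic as the paper's: both combine the Case~1 identity $|S|=t'+|T_3\cap S|$, the bound $|N(S)|\geq t'+k$ coming from~\eqref{eq:cross}, and the hypothesis $|N(S)|\leq|S|$; the paper packages this as a short proof by contradiction, while you unfold it directly by first locating the tail $\{i_{t'+1},\dots,i_t\}$ inside $T_3\cap S$. One minor remark: your appeal to Claim~\ref{claim:prefixneighbors:beforetprime} to obtain $N(S)=N(\{i_1,\dots,i_{t'}\})$ is correct but unnecessary, since the trivial inclusion $N(\{i_1,\dots,i_{t'}\})\subseteq N(S)$ already yields $|N(S)|\geq t'+k$, which is all the argument needs.
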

\begin{claimproof}
Suppose that~$|T_3 \cap S| < k$. Then:
\begin{align*}
|N(S)| &\geq |N(\{i_1, \ldots, i_{t'}\})| & \text{since $S \supseteq \{i_1, \ldots, i_{t'}\}$ and~$S$ is independent,} \\
&\geq k + t' & \text{by (\ref{eq:cross}),} \\
&= |T_1 \cap S| + k & \text{since~$|T_1 \cap S| = t'$,} \\
&> |T_1 \cap S| + |T_3 \cap S| & \text{by the assumption~$k > |T_3 \cap S|$,} \\
&= |S| & \text{since~$T_2 \cap S = \emptyset$,}
\end{align*}
contradicting the precondition to the lemma. It follows that~$|T_3 \cap S| \geq k$. Since~$|T_3 \cap S| \leq |T_3| \leq k$ by Claim~\ref{claim:sthree}, it follows that~$|T_3 \cap S| = k$ and that all vertices of~$T_3$ belong to~$S$.
\end{claimproof}

Let~$\ell := r_\mathcal{P}(i_{t'})$. Since the path decomposition is nice there is only one vertex (i.e.,~$i_{t'}$) that occurs in~$X_\ell$ but not after~$X_\ell$. So~$X_\ell = \{i_{t'}\} \cup T_3$, and Claim~\ref{claim:caseone:three} implies that no vertex of~$N(S)$ occurs in~$X_\ell$ since~$X_\ell = \{i_{t'}\} \cup T_3 \subseteq S$. Claim~\ref{claim:prefixneighbors:beforetprime} shows that all neighbors of~$i_{t'+1}, \ldots, i_t$ are also neighbor to some vertex of the prefix~$i_1, \ldots, i_{t'}$. Since~$i_1, \ldots, i_{t'}$ are ordered by increasing right endpoint of the intervals representing them in the decomposition, all neighbors of~$i_{t'+1}, \ldots, i_t$ therefore have to occur in a bag with index at most~$r_\mathcal{P}(i_{t'})$, and since~$X_{\ell}$ contains no vertex of~$N(S)$, by (P3) it follows that no vertex of~$N(S)$ occurs in a bag with index~$\ell$ or later. Since~$X_\ell = \{i_{t'}\} \cup T_3$ and~$|T_3 \cap S| = k$ by the previous claim, there are~$k + 1$ vertices in~$X_\ell$. Since the size difference of consecutive bags in a nice path decomposition is exactly one, and no bag has size more than~$k+1$ since the width is~$k$, it follows that~$X_{\ell - 1} = X_\ell \setminus \{v\}$ for some vertex~$v \in \{i_{t'}\} \cup T_3 \subseteq S$. Since no vertex of~$N(S)$ occurs in bag~$X_\ell$ or after, and~$v$ does not occur in~$X_{\ell - 1}$ or earlier, it follows that~$v$ does not occur in a bag together with a vertex of~$N(S)$. By the definition of path decomposition, this implies that~$v$ has no neighbor in~$N(S)$; since~$v \in S$ and~$S$ is an independent set (it is a subset of a partite set of a bipartite graph), this implies that~$v$ is an isolated vertex in~$G$. But since~$1 \leq t' < t = |S|$, the set~$S' := \{v\}$ is a nonempty strict subset of~$S$ for which~$0 = |N(S')| \leq |S'| = 1$, contradicting the precondition to the lemma. This concludes the proof of Case 1.

\subparagraph{Case 2: $T_2 \cap S \neq \emptyset$.} We continue the proof of Lemma~\ref{lem:pathwidthprefix} for the case that~$T_2 \cap S \neq \emptyset$. We will show that $T_2\cap S$ is a nonempty strict subset of $S$ with $|T_2\cap S|\geq |N(T_2\cap S)|$. This will contradict our assumption that $S$ is inclusion-wise minimal with the property that $|S|\geq |N(S)|$. Now let us denote $|T_3\cap I|=k_I$ and $|T_3\cap J|=k_J$.
Note that~$T_3 \cap J = T_3 \cap S$, and observe from Claim~\ref{claim:sthree} that
\begin{equation}\label{eq:temp3}
k_I+k_J\leq k.
\end{equation}
Recall from the choice of $r_\mathcal{P}(i_{t'})$ that $|T_1\cap S|= |\{i_1,\ldots,i_{t'}\}|=t'$. Since $S=(T_1\cup T_2 \cup T_3)\cap S$, and the $T_i$'s are mutually disjoint, we have:
\begin{align*}
|S| &= |T_1\cap S| + |T_2\cap S| + |T_3\cap S|\\
&= t'+ |T_2\cap S| + k_J.
\end{align*}
Therefore,
\begin{equation}\label{eq:S2}
|T_2\cap S|=|S|-k_J-t'.
\end{equation}
Also note that
\begin{equation}\label{eq:temp}
\begin{split}
|N(S)| &= |N ((T_1\cup T_2 \cup T_3)\cap S)|\\
&\geq |N((T_1\cup T_2)\cap S)|\\
& = |N(T_1\cap S)|+|N(T_2\cap S)| - |N(T_1\cap S)\cap N(T_2\cap S)|.
\end{split}
\end{equation}
Now observe that any vertex which is a neighbor of some vertex in $T_1\cap S$ and some vertex in $T_2\cap S$, must be both in some bag with index at most~$r_{\mathcal{P}}(i_{t'})$ (to meet~$T_1 \cap S$) and in some bag with index strictly more than $r_{\mathcal{P}}(i_{t'})$ (to meet~$T_2 \cap S$).
This implies that
$
N(T_1\cap S)\cap N(T_2\cap S)\subseteq T_3\cap I.
$
Therefore
\begin{equation}\label{eq:temp2}
|N(T_1\cap S)\cap N(T_2\cap S)|\leq |T_3\cap I|=k_I.
\end{equation}
Hence, \eqref{eq:temp} and \eqref{eq:temp2} yield
\begin{equation}\label{eq:neighbor}
|N(T_2\cap S)|\leq |N(S)|-|N(\{i_1,\ldots,i_{t'}\})|+k_I.
\end{equation}
Therefore, combining \eqref{eq:S2} and \eqref{eq:neighbor} we get
\begin{align*}
|T_2\cap S|-|N(T_2\cap S)|&\geq |S|-k_J-t'-|N(S)|+|N(\{i_1,\ldots,i_{t'}\})|-k_I\\
&= (|S|-|N(S)|)+(|N(\{i_1,\ldots,i_{t'}\})|-t')-(k_I+k_J)\\
&\geq 0 \quad\mbox{from our hypothesis about $S$, and Equation }\eqref{eq:cross}\mbox{ and }\eqref{eq:temp3}.
\end{align*}
So~$T_2 \cap S$ is a nonempty strict subset of~$S$ satisfying the key property, contradicting that~$S$ is inclusion-wise minimal. This completes the proof of Lemma~\ref{lem:pathwidthprefix}.
\end{proof}

Using Lemma~\ref{lem:pathwidthprefix} we bound the TAR reconfiguration threshold in terms of pathwidth.

\begin{theorem}\label{th:pathwidth}
Let $G = (V,E)$ be a graph. Then $\TAR(G)\leq \max(\pw(G),1)$.
\end{theorem}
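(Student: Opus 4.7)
The plan is to proceed by induction on $|V(G)|$, first reducing via Proposition~\ref{prop:balancedbip} to the case where $G = (I \cup J, E)$ is balanced bipartite with $I \cap J = \emptyset$, $V = I \cup J$, and the task is to reconfigure $I$ into $J$; note that $\pw(G[I\Delta J])\leq \pw(G)$, so this reduction does not inflate the target bound. Set $k := \max(\pw(G),1)$ and fix a nice path decomposition $\mathcal{P}$ of $G$ of width $\pw(G)$. The base case $|V|=0$ is immediate, and if $G$ has no edges then reconfiguration proceeds one vertex at a time with buffer $\leq 1 \leq k$.

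For the inductive step I would choose $S \subseteq J$ to be an inclusion-wise minimal non-empty subset satisfying $|N(S)| \leq |S|$; such a set exists because $S = J$ qualifies (as $N(J) \subseteq I$ and $|I|=|J|$). A short minimality argument shows that exactly one of two situations holds: either \textbf{(b)} $|N(S)| = |S| \geq 1$, or \textbf{(a)} $S=\{v\}$ for some vertex $v \in J$ that is isolated in $G$. Indeed, if $|N(S)|<|S|$ and $|S|\geq 2$, then dropping an arbitrary vertex from $S$ would yield a strictly smaller non-empty set still satisfying $|N(\cdot)|\leq|\cdot|$, contradicting minimality.

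In case (b), the plan is to invoke Lemma~\ref{lem:pathwidthprefix} to order $S$ as $i_1,\ldots,i_t$ with $r_{\mathcal{P}}(i_1)<\cdots<r_{\mathcal{P}}(i_t)$, obtaining $|N(\{i_1,\ldots,i_{t'}\})|<t'+\pw(G)\leq t'+k$ for every $t'$. I then run the following loop for $j=1,\ldots,t$: first remove the newly appearing neighbours $N(\{i_1,\ldots,i_j\})\setminus N(\{i_1,\ldots,i_{j-1}\})$ from the current configuration one vertex at a time, then add $i_j$. The peak buffer during iteration $j$ equals $|N(\{i_1,\ldots,i_j\})|-(j-1)$, which is at most $\pw(G)\leq k$ by the lemma; minimality of $S$ forces $|N(\{i_1,\ldots,i_j\})|\geq j+1$ for every strict prefix $j<t$, so the buffer also stays non-negative at the end of each iteration. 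After the loop the current set is $W^{*}=(I\setminus N(S))\cup S$ with buffer $0$, and I would then apply Proposition~\ref{prop:balancedbip} together with the induction hypothesis to the strictly smaller graph $G-(S\cup N(S))$ to reconfigure the residual $I\setminus N(S)$ into $J\setminus S$, with the tokens on $S$ remaining in place throughout the recursion; since $\pw(G-(S\cup N(S)))\leq\pw(G)$, this yields the full reconfiguration with buffer $\leq k$.

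Case (a) is handled by a one-step swap, to avoid the negative buffer one would get by naively adding the isolated $v$: since $|I|=|J|\geq 1$, pick any $u \in I$, remove $u$ (buffer $1$), add $v$ (buffer $0$), and then apply induction to $G-\{u,v\}$ to reconfigure $I\setminus\{u\}$ into $J\setminus\{v\}$, with $v$ staying in the current configuration throughout the recursion and causing no conflicts because it is isolated. The main obstacle I expect is the buffer analysis in case (b): the peak bound follows cleanly from Lemma~\ref{lem:pathwidthprefix}, but verifying that the buffer stays non-negative at intermediate stages requires a second use of minimality to lower-bound the prefix neighbourhoods, and one must carefully argue that $W^{*}$ is independent in $G$ and that the recursion's intermediate sets combine correctly with $S$ in the full graph.
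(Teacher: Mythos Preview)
Your proposal is correct and follows essentially the same approach as the paper: induct on $|V|$, reduce via Proposition~\ref{prop:balancedbip} to the balanced bipartite setting, pick an inclusion-minimal $S\subseteq J$ with $|N(S)|\le|S|$, order $S$ by right endpoints and use Lemma~\ref{lem:pathwidthprefix} to bound the peak buffer $|N(\{i_1,\dots,i_j\})|-(j-1)\le\pw(G)$, then recurse on $G-(S\cup N(S))$. The only cosmetic difference is that the paper handles every singleton $S$ (degree $0$ or $1$) in its special case while you single out only the isolated-vertex situation and let the degree-$1$ singleton fall under the general loop; your additional remark about the buffer staying non-negative is true but not actually needed for the $\TAR$ constraint.
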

\begin{proof}
We prove this theorem using induction on the number of vertices.
As before, it is enough to consider $G=(V,E)$ and assume that the initial and target independent sets $I$ and $J$ respectively are such that $|I|=|J|$, $I\cup J=V$ and $I\cap J=\emptyset$.
We will show that~$\pw(G)\leq k$ implies that~$\TAR(G)\leq k$, using induction on the number of vertices~$n$. For $n=1$, the statement is trivially true. Now fix any $k\geq 1$, and assume the induction hypothesis that any graph $G$ with $n$ vertices satisfying $\pw(G)\leq k$ has $\TAR(G)\leq k$.

Assume $G$ is a graph of $n+1$ vertices having pathwidth at most $k$. Let $S$ be an inclusion-minimal subset of $J$ for which $|S|\geq |N(S)|$. Such a set exists since $|J| = |I| \geq |N(J)|$. We will show that if we reconfigure the set $S$ in a suitable order by moving tokens from~$N(S)$ onto~$S$, then the buffer size will not grow beyond $k$. There are enough vertices in~$S$ to accommodate all tokens on~$N(S)$, and afterward we will invoke induction.

We first deal with a special case. If~$S = \{v\}$ is a singleton set, then it has degree at most one since~$|S| \geq |N(S)|$. Move the token from the neighbor~$u$ of~$v$ (or from an arbitrary vertex~$u$, if~$v$ has no neighbors) into the buffer, and then onto~$v$. By induction there exists a TAR reconfiguration from~$I \setminus \{u\}$ to~$J \setminus \{v\}$ in~$G - \{u,v\}$ using a buffer of size at most~$\max(\pw(G - \{u,v\}), 1) \leq \max(\pw(G), 1)$. When inserting the token move from~$u$ onto~$v$ at the beginning of this sequence, we get a TAR reconfiguration from~$I$ to~$J$ with the desired buffer size. In the remainder of the proof we can therefore assume~$|S| \geq 2$. This implies that~$|S| = |N(S)|$: if~$|S| > |N(S)|$ and~$|S| \geq 2$, then we can remove a vertex~$v$ from~$S$ to obtain~$|S \setminus \{v\}| \geq |N(S \setminus \{v\})|$ for the nonempty set~$S \setminus \{v\}$, contradicting minimality.

Let $\mathcal{P}=(X_1,X_2,\ldots,X_r)$ be a nice path decomposition of width at most $k$. If~$G$ has no edges, then~$S$ is a singleton set containing an isolated vertex. Since we already covered that case, we know~$G$ has at least one edge, so any path decomposition has width~$k \geq 1$.
Enumerate the vertices of $S$ as~$i_1,\ldots, i_m$ such that $r_{\mathcal{P}}(i_1) < \ldots < r_{\mathcal{P}}(i_m)$. Hence the vertices are ordered by increasing rightmost endpoint of the interval of bags containing it.

In order to describe the reconfiguration procedure we suitably group several TAR reconfiguration steps together as one step in the algorithm.
In particular, one reconfiguration step in the algorithm described below will consist of a run of successive removals of nodes, followed by a single node addition.

We use the notion of a \emph{buffer set} $B_t$ at the $t^{th}$ step of the reconfiguration, such that $|B_t|$ will correspond to the number of tokens in the buffer at any particular time, and $\max_t |B_t|+1$ will correspond to the maximum buffer size of the corresponding TAR reconfiguration sequence. The buffer set is a subset of vertices, showing where the tokens in the buffer came from. At time step $t=0$, define $W_0=I$ to be the independent set of vertices with a token, and let the buffer set $B_0$ be empty. We will define intermediate independent sets~$W_i$ and buffer sets~$B_i$ representing the grouped reconfiguration steps. The algorithm stops when $W_m$ contains all vertices in $S$; we will then invoke the induction hypothesis to finish the sequence. From the sequence~$(W_0, W_1, \ldots, W_m)$ one obtains a formal reconfiguration sequence as defined in Section~\ref{sec:reconfig} by inserting ``transitioning independent sets'' in between~$W_i$ and~$W_{i+1}$ for all~$i$. From~$W_i$, repeatedly remove one vertex until arriving at~$W_{i+1} \setminus W_i$, and then add the single vertex of~$W_{i+1} \setminus W_i$ to the resulting set. 

For $t\geq 1$, the transition from~$t-1$ to~$t$ is obtained as follows. Let~$u_t$ be an arbitrary vertex from $B_{t-1}\cup (N(i_t)\cap W_{t-1})$. Intuitively, at step~$t$ we take the token from~$u_t$ (in the buffer set or on a neighbor of $i_t$) and move it onto vertex~$i_t$, causing~$u_t$ to disappear from the buffer and adding~$i_t$ to the independent set. To ensure the resulting set is independent, tokens on neighbors of~$i_t$ are moved into the buffer beforehand.
Observe that the above step is valid only if $B_{t-1}\cup (N(i_t)\cap W_{t-1})$ is nonempty. Below in Claim~\ref{claim:buffer} we show that due to the choice of $S$, this is indeed the case for all $t\leq m$. Formally, we obtain the following:
\begin{algorithm*}[{Reconfiguring graphs with small pathwidth}]
Initialize with~$B_0 = \emptyset$ and~$W_0 = I$. We recursively define~$B_t$ and~$W_t$ for~$t \geq 1$.
\begin{enumerate}
\item The neighbors of~$i_t$ that have tokens (i.e.~that are in the current independent set) are removed from the previous independent set~$W_{t-1}$, making room to add~$i_t$ to the new independent set: $W_t=(W_{t-1}\setminus N(i_t))\cup \{i_t\}$.
\item The neighbors of~$i_t$ belonging to the previous independent set~$W_t$ move to the buffer, while~$u_t$ is removed from the buffer since its token has moved onto~$i_t$:
\begin{equation}
B_t=(B_{t-1}\cup (N(i_t)\cap W_{t-1}))\setminus \{u_t\}.
\end{equation}
\end{enumerate}
\end{algorithm*}
As mentioned earlier, a step from $W_t$ to $W_{t+1}$ can be thought as a sequence of successive removals of the nodes $N(i_{t+1})\cap W_{t}$, and then addition of the node $i_{t+1}$.
During this successive TAR reconfiguration sequence corresponding to the step $W_t$ to $W_{t+1}$, the maximum buffer size is given by $|B_{t+1}|+1$, since the buffer size will be~$|B_{t-1}\cup (N(i_t)\cap W_{t-1})|$ just before the buffer token from~$u_t$ is moved onto~$i_t$.
Therefore, the maximum buffer size in the entire TAR reconfiguration sequence starting from $W_0$ and ending at $W_m$ is given by $\max_{0\leq t\leq m} |B_t|+1$.
Also, at the end of the algorithm, all vertices from the set $S$ will be in the independent set, and no vertex in the buffer set.
This can be seen by observing the following.
Initially all tokens were on the vertices belonging to the set $N(S) \subseteq I$, since~$S \subseteq J$.
At each step of the algorithm essentially one token is selected from $N(S)$ as long as the number of such tokens is positive, and is placed on some vertex in $S$.
Now since $|S|\geq |N(S)|$, all the tokens in $N(S)$ must eventually exhaust before the algorithm terminates placing one token at each vertex of $S$.
For the validity of the above algorithm we claim the following, which in turn also characterizes the size of the buffer set at all intermediate time steps.
\begin{claim} \label{claim:buffer}
For all $1 \leq t\leq m$ we have that
$B_{t-1}\cup (N(i_t)\cap W_{t-1})$ is nonempty,
and that $|B_t|=|N(\{i_1,\ldots,i_{t}\})|-t$.
\end{claim}
\begin{claimproof} 
Suppose on the contrary that there exists $t'\leq m$, such that $B_{t'-1}\cup (N(i_{t'})\cap W_{t'-1})$ is empty for the first time. If~$t' = 1$, then~$B_{t'-1} \cup (N(i_{t'}) \cap W_{t'-1})$ is empty, and in particular~$N(i_{t'}) = \emptyset$, so that~$i_{t'} = i_1$ is an isolated vertex. But since~$|S| \geq 2$ by our argument above, it follows that~$S' = \{i_1\}$ is a nonempty strict subset with~$|S'| \geq |N(S')|$; a contradiction. So in the remainder we consider~$t' > 1$. We show that, for all $t< t'$, $|B_t|=|N(\{i_1,\ldots,i_{t}\})|-t$.
Using this, we prove that $2 \leq t' \leq m$ leads to a contradiction.

Observe that for any $t< t'$, after the $t^{th}$ step of the algorithm, the total number of distinct vertices that have been added to the buffer set is given by $|N(\{i_1,\ldots,i_{t}\})|$. Furthermore, for all $t''\leq t < t'$, the set $B_{t''-1}\cup (N(i_{t''})\cap W_{t''-1})$ has always been nonempty.
This implies that at each step, precisely one token has been removed from the buffer, thus reducing the size of the buffer set by moving a buffer token onto a vertex that is added to the independent set.
Therefore, in total $t$ times the size of the buffer set reduces by one.
Since initially the buffer set was empty, for any $t<t'$
we have $|B_t|=|N(\{i_1,\ldots,i_{t}\})|-t$.

Since we have assumed that $B_{t'-1}\cup (N(i_{t'})\cap W_{t'-1})$ is empty, we know $B_{t'-1}$ is empty, and therefore from the above argument
$|B_{t'-1}|=|N(\{i_1,\ldots,i_{t'-1}\})|-(t'-1)=0$.

Defining $S':=\{i_1,\ldots,i_{t'-1}\}\subsetneq S$, we have $|N(S')|\leq |S'|$. Since~$t' \geq 2$ the set~$S'$ is nonempty, contradicting the minimality of~$S$. This proves the first part of the claim. Since the buffer does not become empty until after step~$t$, the given argument then also proves the second part of the claim.
\end{claimproof}


Note that in particular $|B_m|=|N(\{i_1,\ldots,i_{m}\})|-m= |N(S)|-|S| = 0$; the buffer empties for the first time only after reconfiguring the whole set.

It remains to show that throughout the process the buffer size will not grow beyond~$k$, i.e.~$|B_t|\leq k-1$, for all $t\leq m$.
Claim~\ref{claim:buffer}~(ii) implies that
$\max_{t\leq m}|B_t|\geq k$ if and only if $\exists\ t \leq m$ such that
$|N(\{i_1,\ldots,i_{t}\})|-t\geq k$,
which is not possible due to Lemma~\ref{lem:pathwidthprefix}.
This then ensures that throughout the algorithm, the buffer size will never exceed $k$.

Since the buffer set empties out after reconfiguring the set $S$,
after the execution of the algorithm, $W_m\cap J=S$ and $W_m\cap I\subset V\setminus (S\cup N(S))$.
Now define $G'=G- (S\cup N(S))$, $I'=I\cap W_m$, and $J'=J\setminus S$. Observe that $G'$ has pathwidth at most $k$, and
$|I'|=|I\cap W_m|=|I|-|S|=|J'|$.
Furthermore, since $S$ is non-empty, $|V(G')|\leq n$. By the induction hypothesis, there exists a TAR reconfiguration sequence from~$I'$ to~$J'$ in~$G'$ using a buffer of size at most~$k$. Since $N(S)$ is not in $G'$, any independent set in $G'$ remains to be an independent set in $G$ when augmented with the set $S$. Therefore we can first apply the given reconfiguration from~$N(S)$ to~$S$, followed by the reconfiguration from~$I'$ to~$J'$, to reconfigure~$I$ to~$J$ with a buffer of size at most~$k$. 
\end{proof}

Observe by considering a complete balanced bipartite graph on $2n$ vertices $K_{n,n}$, that in general the above bound is tight. Indeed, from \cite{B98} we know that $K_{n,n}$ has pathwidth equal to $n$, and as explained earlier, the TAR reconfiguration threshold is also $n$.

\subsection{Obstructions to TAR Reconfigurability}
Having proved Theorem~\ref{th:pathwidth}, it is natural to ask whether pathwidth in some sense characterizes the TAR reconfiguration threshold: does large pathwidth of a graph imply that its TAR reconfiguration threshold is large? This is not the case: the pathwidth of a complete binary tree is proportional to its depth~\cite{KinnersleyL94}, but its reconfiguration threshold is one by Theorem~\ref{thm:tree}.

We now identify a graph structure which forces the TAR reconfiguration threshold to be large.
First we formally introduce the special type of minor illustrated in Figure~\ref{fig:binary}.

\begin{definition}[{Bipartite topological double minor}] \label{def:btd:minor}
Let $G = (I \cup J, E)$ be a bipartite graph and let~$H$ be an arbitrary graph. Then $H$ is a \emph{bipartite topological double minor} of $G$, if one can assign to every $v\in V(H)$ a subgraph $\varphi(v)$ of $G$, which is either an edge or an even cycle in $G$, and one can assign to each edge $e = \{u, v\} \in E(H)$ a pair of odd-length paths $\psi_1(e)$, $\psi_2(e)$ in $G$, such that the following holds:
\begin{itemize}
\item For any $u,v \in V(H)$ with $u\neq v$ the subgraphs $\varphi(u)$ and $\varphi(v)$ are vertex-disjoint.
\item For any $v \in V(H)$ no vertex of $\varphi(v)$ occurs as an interior vertex of a path $\psi_1(e)$ or $\psi_2(e)$, for any $e \in E(H)$.
\item	For any $e, e' \in E(H)$ the paths $\psi_1(e)$ and $\psi_2(e')$ are internally vertex-disjoint.
\item	For any $e = \{u,v\} \in E(H)$ the paths $\psi_1(e)$ and $\psi_2(e)$ both have one endpoint in $\varphi(v)$ and one endpoint in $\varphi(u)$.
\item	For any $v \in V(H)$ and edge $\{u,v\}\in E(H)$, the attachment points of~$\psi_1(e)$ and~$\psi_2(e)$ in~$\varphi(v)$ belong to different partite sets.
\end{itemize}
The triple~$(\varphi, \psi_1, \psi_2)$ is a \emph{BTD-minor model} of~$H$ in~$G$. For an edge~$e \in E(H)$ we define $\psi'_1(e), \psi'_2(e) \subseteq V(G)$ as the \emph{interior} vertices of the paths~$\psi_1(e)$ and~$\psi_2(e)$, which may be~$\emptyset$ if the path consists of a single edge.
\end{definition}

Intuitively, $H$ occurs as a bipartite topological double minor (or \emph{BTD-minor}) if each vertex of $H$ can be realized by an edge or even cycle, and every edge of $H$ can be realized by two odd-length paths that connect an $I$-vertex of $\varphi(v)$ to a $J$-vertex of $\varphi(u)$ and the other way around, in such a way that these structures are vertex-disjoint except for the attachment of paths to cycles. The definition easily extends to bipartite graphs whose bipartition is not given, since a BTD-minor is contained within a single connected component of the graph, which has a unique bipartition.

\begin{proposition} \label{prop:btd:minor}
Let~$G = (I \cup J, E)$ be a bipartite graph having a connected graph~$H$ as a BTD-minor model~$(\varphi, \psi_1, \psi_2)$, such that each vertex of~$G$ is in the image of~$\varphi$,~$\psi_1$, or~$\psi_2$. Then~$G$ has a perfect matching with~$|I| = |J|$ edges, and for any independent set~$W$ in~$G$:
\begin{enumerate}
	\item For each vertex~$v$ of~$H$ we have~$|W \cap \varphi(v)| \leq |\varphi(v)| / 2$.
	\item For each edge~$e$ of~$H$ and~$i \in \{1,2\}$ we have~$|W \cap \psi_i'(e)| \leq |\psi_i'(e)| / 2$.
\end{enumerate}
For a \emph{maximum} independent set~$W$, equality holds in all cases.
\end{proposition}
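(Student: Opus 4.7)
The plan is to show that the three vertex-disjoint collections $\{\varphi(v) : v \in V(H)\}$ and $\{\psi'_i(e) : e \in E(H),\ i \in \{1,2\}\}$ partition $V(G)$, and that each piece individually admits a perfect matching and has independence number equal to half its size. Then global perfect matching and the upper bound on $|W|$ both follow by summing piece by piece, after which equality at the maximum drops out automatically.

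First, I would establish the partition. Because distinct $\varphi(u),\varphi(v)$ are vertex-disjoint, no $\varphi(v)$ meets any interior of a $\psi_i(e)$, and any two interiors $\psi'_i(e)$ and $\psi'_j(e')$ are vertex-disjoint (this last being exactly the internal vertex-disjointness condition when $(e,i) \neq (e',j)$; note the case $(e,i)=(e',j)$ is trivial, and for $i=j$, $e\neq e'$ it follows from the internal disjointness between $\psi_1(e),\psi_2(e')$ or symmetrically; if one uses the two paths attached to the same edge, they are internally disjoint since both have endpoints in the two $\varphi$'s, which are the only common vertices). Combined with the hypothesis that every vertex of $G$ lies in the image of $\varphi$, $\psi_1$ or $\psi_2$, we obtain $V(G) = \bigsqcup_v \varphi(v) \sqcup \bigsqcup_{e,i} \psi'_i(e)$.

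Next I would observe that each piece carries a perfect matching and has independence number equal to half its order. Each $\varphi(v)$ is either an edge $K_2$ or an even cycle $C_{2k}$; both are balanced bipartite, have a perfect matching, and have independence number equal to half the number of vertices. Each $\psi'_i(e)$ is the set of interior vertices of a path of odd length (say $\ell$ edges), hence is itself a path on $\ell-1$ vertices, which is even; such a path has a perfect matching and independence number $(\ell-1)/2=|\psi'_i(e)|/2$. Taking the union of these matchings over all pieces yields a matching in $G$ covering every vertex, so it is a perfect matching of $G$. Since $G$ is bipartite, this immediately gives $|I|=|J|=|V(G)|/2$.

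For the independent-set bounds, if $W$ is independent in $G$, then $W\cap\varphi(v)$ is independent in $\varphi(v)$ and $W\cap\psi'_i(e)$ is independent in the path $\psi'_i(e)$, so the two inequalities in items~1 and~2 follow from the independence numbers computed above. Summing over all pieces and using the partition gives
\[
|W| \;\leq\; \sum_{v\in V(H)} \tfrac{|\varphi(v)|}{2} \;+\; \sum_{e\in E(H),\ i\in\{1,2\}} \tfrac{|\psi'_i(e)|}{2} \;=\; \tfrac{|V(G)|}{2} \;=\; |I|.
\]
Finally, for a maximum independent set $W$, note that $I$ itself is an independent set of size $|V(G)|/2$ (by the balance just shown), so $|W|=|V(G)|/2$. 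If any one of the piecewise inequalities were strict, summing would give $|W|<|V(G)|/2$, a contradiction; thus equality holds in every piece. The main conceptual content is verifying that the BTD-minor axioms actually force the partition property and the ``path as interior'' structure; everything else reduces to standard facts about matchings and independent sets in edges, even cycles, and even paths.
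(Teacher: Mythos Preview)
Your proof is correct and follows essentially the same route as the paper's: partition $V(G)$ into the pieces $\varphi(v)$ and $\psi'_i(e)$, equip each piece with its own perfect matching, and read off both the global perfect matching and the piecewise independence bounds. Your treatment of the equality case is in fact slightly more explicit than the paper's, which only states the converse implication (equality everywhere implies maximum) and leaves the reader to deduce that a maximum independent set must achieve equality in every piece; your contrapositive summation argument makes this direction clean.
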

\begin{proof}
To see that~$G$ has a perfect matching, observe that each~$\varphi(v)$ for~$v \in V(H)$ is either an edge or an even cycle, which can be covered completely be a matching consisting of edges from~$\varphi(v)$. For each~$e \in E(H)$ and~$i \in \{1,2\}$ there is an even number of interior vertices on the path~$\psi_i(e)$, since the path has odd length. The interior vertices~$\psi'_i(e)$ can therefore also be covered completely by a matching of edges among~$\psi'_i(e)$. Since each vertex of~$G$ is in the image of~$\varphi$ or~$\psi_{1,2}$, the sets~$\varphi(v)$ together with sets~$\psi'_i(e)$ for~$e \in E(H)$ and~$i \in \{1,2\}$ cover~$V(G)$. Since these sets are vertex-disjoint by Definition~\ref{def:btd:minor}, they form a partition of~$G$. By the preceding argument, this implies~$G$ has a perfect matching~$M$ where no edge crosses the described partition of~$V(G)$. This matching has size~$|I| = |J|$ since~$G$ is bipartite.

Now we prove the two claimed properties. If~$W$ is an independent set, it contains at most one endpoint from each edge in~$M$ and therefore contains at most half the vertices of each~$\varphi(v \in V(H))$ and each~$\psi'_i(e \in E(H))$. Any independent set~$W$ achieving equality for all these sets has size~$|I| = |J|$ and is therefore maximum.
\end{proof}

For a bipartite graph~$G$, let~$\tree(G)$ denote the largest integer~$k$ for which~$G$ contains a complete binary tree of depth~$k$ as a BTD-minor. For a class of bipartite graphs $\Pi$ we define $\tree(\Pi):=\sup_{G\in\Pi}\tree(G)$.

\begin{theorem}\label{th:bip minor}
There exists a real constant~$c > 0$ such that any hereditary graph class $\Pi$ satisfies~$\TAR(\Pi) \geq c \cdot \tree(\Pi_\bip)$.
\end{theorem}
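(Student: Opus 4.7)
The plan is to exhibit a graph $G \in \Pi_\bip$ containing a depth-$d$ complete binary tree $T$ as a BTD-minor (with $d$ arbitrarily close to $\tree(\Pi_\bip)$) and show $\TAR(G) = \Omega(d)$; since $\Pi$ is hereditary and contains $G$, this gives $\TAR(\Pi) \geq \TAR(G) = \Omega(d)$. Fix a BTD-minor model $(\varphi, \psi_1, \psi_2)$ of $T$ in $G$. Passing to the induced subgraph on $\bigcup_{v} \varphi(v) \cup \bigcup_{e \in E(T)}(\psi'_1(e) \cup \psi'_2(e))$, which is again in $\Pi_\bip$ by hereditarity, I may assume every vertex of $G$ lies in the image of the model, so Proposition \ref{prop:btd:minor} applies: the bipartition $I \cup J$ consists of two equal-size maximum independent sets.

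The first step is to coarsen any TAR reconfiguration $W_0 = I, W_1, \ldots, W_m = J$ with peak buffer $k$ into a sequence of labelings $s_t \colon V(T) \to \{\text{I-state}, \text{$*$-state}, \text{J-state}\}$, where $s_t(v)$ is I-state if $W_t \cap \varphi(v) = I \cap \varphi(v)$, J-state if $W_t \cap \varphi(v) = J \cap \varphi(v)$, and $*$-state otherwise. Since $\varphi(v)$ is an edge or an even cycle, it has exactly two maximum independent sets, so a $*$-state node contributes at least one to the deficit $|I| - |W_t|$. A short parity argument using the attachment conditions shows that whenever a tree-edge $e = \{u, v\}$ has $s_t(u) = \text{I-state}$ and $s_t(v) = \text{J-state}$, both endpoints of one of $\psi_1(e), \psi_2(e)$ lie in $W_t$, forcing an additional deficit of at least one on its interior. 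Thus for every $t$,
\[
\bigl|\{v : s_t(v) = \text{$*$-state}\}\bigr| + \bigl|\{e \in E(T) : s_t \text{ labels } e \text{ as }(\text{I}, \text{J})\}\bigr| \;\leq\; |I| - |W_t| \;\leq\; k.
\]

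The second step is to read off from the labeled sequence a witness that the pathwidth of $T$ is $O(k)$. Order the tree nodes by $\sigma(v) := \max\{t : s_t(v) = \text{I-state}\}$, the last time $v$ is in I-state. A vertex $v$ with $\sigma(v) \leq t$ having a tree-neighbor $u$ with $\sigma(u) > t$ must, just after time $\sigma(v)$, be in $*$-state while $u$ remains in I-state, producing either a $*$-node or an (I, J)-tree-edge at that moment. A charging argument yields that the vertex separation of $T$ under $\sigma$ is $O(k)$. The isoperimetric inequality for binary trees, namely that the pathwidth of a depth-$d$ complete binary tree is $\Omega(d)$, then gives $k = \Omega(d)$ as required.

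The main obstacle will be the charging in the second step: TAR reconfigurations can be non-monotone, so a tree node may oscillate between I-state, $*$-state, and J-state, meaning the naive bag $B_t := \{v : s_t(v) = *\text{-state}\} \cup V(\text{(I,J)-edges at time }t)$ need not satisfy the interval property of a path decomposition. I anticipate handling this either through a monotonization lemma---converting an arbitrary reconfiguration into one in which each tree node transits $\text{I} \to * \to \text{J}$ at most once, at the cost of only a constant-factor increase in peak buffer, by analogy with monotone-equals-non-monotone equivalences in graph searching---or via a direct charging argument that attributes each crossing in the $\sigma$-order to a deficit token simultaneously held in the buffer.
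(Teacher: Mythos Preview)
Your first step is correct and is essentially the same local deficit-counting that the paper performs: a $*$-state node $v$ contributes at least one unit of deficit inside $\varphi(v)$, and an $(\text{I},\text{J})$-tree-edge forces a deficit inside the interior of one of its two paths (the one whose endpoints both lie in $W_t$; if that path has length~$1$ the configuration is simply impossible since $W_t$ is independent). So the inequality
\[
\bigl|\{v:\ s_t(v)=*\}\bigr| \;+\; \bigl|\{e\in E(T):\ e\ \text{is }(\text{I},\text{J})\ \text{at time }t\}\bigr|\ \leq\ |I|-|W_t|\ \leq\ k
\]
is exactly the content of the paper's Claim~\ref{claim:fewer}, just split into the two cases ``$v\notin S$ is $*$-state'' versus ``$v\notin S$ is I-state''.

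Where you diverge from the paper is the second step, and this detour is where you manufacture your own obstacle. You try to convert the whole reconfiguration sequence into a linear layout of $T$ and bound its vertex separation; this forces you to worry about non-monotone oscillations $\text{I}\to *\to\text{I}$, monotonization lemmas, and charging over time. The paper avoids all of this by taking a \emph{single snapshot}. By the vertex-isoperimetric inequality for the depth-$d$ complete binary tree (the same fact you quote as ``$\pw(T)=\Omega(d)$''), there is an integer $i_0$ such that every $i_0$-subset of $V(T)$ has at least $c_1 d$ neighbours in~$T$. Let $S_t := \{v:\ s_t(v)=\text{J-state}\}$; since a single TAR step touches one vertex of $G$ and the $\varphi(v)$'s are disjoint, $|S_t|$ changes by at most one, so there is a first time $t^*$ with $|S_{t^*}|=i_0$. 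Every $v\in N_T(S_{t^*})$ is either $*$-state (contributing to your first term) or I-state with a J-state neighbour (contributing a distinct $(\text{I},\text{J})$-edge to your second term). Your own inequality at time $t^*$ then gives $c_1 d\le |N_T(S_{t^*})|\le k$, and you are done. No path decomposition of $T$, no monotonization, no charging over time is needed: the snapshot argument eats your ``main obstacle'' for free.
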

\begin{proof}
As before, we consider a balanced bipartite graph $G\in\Pi_\bip$ with bipartition $V(G)=I\cup J$ that has a complete binary tree $T$ of depth $d$ as a BTD-minor.
Since the graph class is hereditary, for the lower bound, we consider only the subgraph of $G$ induced by $\bigcup_{v\in V(T)}\{\varphi(v)\} \cup \left(\bigcup_{e\in E(T)}\{\psi_1(e)\cup\psi_2(e)\} \right)$, and without loss of generality, we shall refer to it as $G$ itself.
\begin{fact}[\cite{BC09}] \label{fact:ndb} 
There is a universal constant $c_1>0$ such that if~$T$ is a complete binary tree of depth $d$, then
$\displaystyle \max_{1\leq i\leq |V(T)|}\min_{S\subseteq V(T);|S|=i}|N_T(S)|\geq c_1 \cdot d$.
\end{fact}
The above implies that there exists $i_0 \leq |V(T)|$, such that any size-$i_0$ subset of $V(T)$ has a neighborhood of size at least $c_1 \cdot d$. Let~$I \cup J$ be the unique bipartition of the connected graph~$G$, and consider an arbitrary TAR reconfiguration sequence from $I$ and $J$. In this sequence $(I = W_0, W_1, \ldots, W_t = J)$ of independent sets in $G$, look at the reconfiguration step when for the first time there exists $S\subseteq V(T)$ with $|S|=i_0$, such that the intermediate independent set $W$ at that step contains $\bigcup_{v\in S}(\varphi(v)\cap J)$, and for all $v\notin S$ it satisfies $(\varphi(v)\cap W\cap J)\subsetneq(\varphi(v)\cap J)$. We will prove that~$|J| - |W| \geq c_1 \cdot d$, implying that from the initial independent set of~$|I| = |J|$ tokens, at least~$c_1 \cdot d$ tokens must reside in the buffer.

To prove the theorem, consider the intermediate independent set $W$, and the set $S\subseteq V(T)$ with $|S|=i_0$  satisfying the above criteria. The following claim shows that for each vertex in~$N_T(S)$, the independent set~$W$ uses at least one vertex fewer than the maximum independent set~$J$ does.
\begin{claim}\label{claim:fewer}
Consider an edge $e=\{u,v\}\in E(T)$ with $u\in S$ and $v\notin S$, and let~$Q_{e,v} \subseteq V(G)$ denote the vertices in~$\varphi(v) \cup \psi'_1(e) \cup \psi'_2(e)$. The following holds:
\begin{equation}\label{eq:object}
|W\cap Q_{e,v}| < |J \cap Q_{e,v}|=\frac{|Q_{e,v}|}{2}.
\end{equation}
\end{claim}
\begin{claimproof}
By Proposition~\ref{prop:btd:minor}, the maximum independent set~$J$ contains exactly half the vertices of~$Q_{e,v}$.
If $|W\cap\psi'_i(e)|<|\psi'_i(e)|/2$ for some~$i \in \{1,2\}$, then we are done: by Proposition~\ref{prop:btd:minor} the set~$W$ contains fewer vertices from~$\psi'_i(e)$ that the maximum independent set~$J$ does, and this cannot be compensated within the other parts of the structure since~$J$ contains half the vertices there and no independent set contains more.
In the remainder, we can assume that~$W$ contains exactly half the vertices from~$\psi'_1(e)$ and~$\psi'_2(e)$. Then the following are true:
\begin{enumerate}[{\normalfont (i)}]
\item All $J$-nodes of $\varphi(u)$ are in $W$ (by our choice of~$W$ and since~$u \in S$).
\item Some $J$-node of $\varphi(v)$ is not in $W$ (by our choice of~$W$ and since~$v \not \in S$).
\item Some $I$-node of $\varphi(v)$ is not in~$W$. To see this, let~$i \in \{1,2\}$ such that~$\psi_i(e)$ is an odd-length path from a $J$-node in~$\varphi(u)$ to an $I$-node in~$\varphi(v)$, which exists by Definition~\ref{def:btd:minor}, and orient it in that direction. Since the first vertex on the path is a $J$-node in~$\varphi(u)$, it is contained in~$W$ as shown above. Hence the second vertex on the path, the first interior vertex, is not in~$W$. Since exactly half the interior vertices from~$\psi_i(e)$ belong to~$W$, every other interior vertex from~$\psi_i(e)$ is in~$W$. Since the path has an even number of interior vertices and the first interior vertex is not in~$W$, the last interior vertex must be in~$W$. But this prevents its $I$-node neighbor in~$\varphi(v)$ from being in~$W$.
\end{enumerate}
Therefore, since $\varphi(v)$ is either an edge or an even cycle, we have $|W \cap \varphi(v)|<|\varphi (v)|/2$ by observing the following: the only independent sets in~$\varphi(v)$ of size~$|\varphi(v)| / 2$ are~$\varphi(v) \cap I$ and~$\varphi(v) \cap J$, but~$\varphi(v) \cap W$ is not equal to either of these sets since it avoids a $J$-node and an $I$-node. Hence $|W \cap \varphi(v)|<|\varphi (v)|/2 = |J \cap \varphi(v)|$, and Proposition~\ref{prop:btd:minor} shows that this cannot be compensated in other parts of the minor model, implying~$|W \cap Q_{e,v}| < |J \cap Q_{e,v}|$.
\end{claimproof}

Using Claim~\ref{claim:fewer} we finish the proof of Theorem~\ref{th:bip minor}. For each~$v \in N_T(S)$, pick an edge~$e = \{u,v\}$ such that~$u \in S$. By Claim~\ref{claim:fewer} the set~$W$ contains less than half the vertices of~$Q_{e,v}$, while the maximum independent set~$J$ contains exactly half. Since the sets~$Q_{e,v}$ considered for different vertices~$v \in N_T(S)$ are disjoint, while Proposition~\ref{prop:btd:minor} shows that from the other pieces of the minor model~$W$ cannot use more vertices than~$J$ does, it follows that~$|W| \leq |J| - |N_T(S)| \leq |J| - c_1 \cdot d$. Hence the buffer contains at least~$c_1 \cdot d$ tokens.
\end{proof}

\section{Conclusion} \label{sec:conclusion}
In this paper we considered two types of reconfiguration rules for independent set, involving simultaneously jumping tokens and reconfiguration with a buffer. For both models, we derived tight bounds on the corresponding reconfiguration thresholds in terms of several graph parameters like the minimum vertex cover size, the minimum feedback vertex set size, and the pathwidth.
Many results in the literature concerning the parameter pathwidth can be extended to hold for the parameter treewidth as well.
This is not the case here; the upper bound on the TAR reconfiguration threshold in terms of pathwidth (Theorem~\ref{th:pathwidth}) cannot be strengthened to treewidth, since one can make arbitrarily deep complete binary trees as BTD-minors in bipartite graphs of treewidth only two (see Figure~\ref{fig:binary}).
On the other hand, there are bipartite graphs of large treewidth with TAR reconfiguration threshold two (Figure~\ref{fig:treewidth}). To characterize the TAR reconfiguration threshold one therefore needs to combine graph connectivity (as measured by the width parameters) with notions that constrain the parity of the connections in the graph. This is precisely why we introduced BTD-minors.
We conjecture that the converse of Theorem~\ref{th:bip minor} holds, in the sense that any hereditary graph class having a large TAR reconfiguration threshold must contain a graph having a complete binary tree of large depth as a BTD-minor. Our belief is based partially on the fact that a BTD-minor model of a deep complete binary tree is arguably the simplest graph of large pathwidth and feedback vertex number. Resolving this conjecture is our main open problem.
%
%

{\small
\subparagraph*{Acknowledgments.}  

This research was financially supported by The Netherlands Organization for Scientific Research (NWO) through TOP-GO grant 613.001.012, Gravitation Networks grant 024.002.003 and a Veni grant `Frontiers in Parameterized Preprocessing'. Debankur thanks Sem Borst for his comments on the motivation for the reconfiguration thresholds.

\bibliographystyle{plainurl}
\bibliography{reference}}

\end{document}